\def\BibTeX{{\rm B\kern-.05em{\sc i\kern-.025em b}\kern-.08em
    T\kern-.1667em\lower.7ex\hbox{E}\kern-.125emX}}
\newcolumntype{P}[1]{>{\centering\arraybackslash}p{#1}}
\newtheorem{problem}{\textbf{Problem}}
\newtheorem{definition}{\bf{Definition}}
\newtheorem{lemma}{\bf{Lemma}}
\newtheorem{theorem}{\bf{Theorem}}
\newtheorem{remark}{\bf{Remark}}
\newtheorem{assumption}{\bf{Assumption}}
\def\myunderbar#1{\underline{\sbox\tw@{$#1$}\dp\tw@\z@\box\tw@}}
\newcommand{\vect}[1]{\boldsymbol{#1}}
\newcommand{\norm}[1]{ \lVert #1 \rVert}
\newcommand{\Nhop}[2]{\mathcal{N}_{#1}^{{#2}\text{-hop}}} 
\newcommand{\Neigh}[2]{\mathcal{N}_{#1}^{{#2}}} 
\newcommand{\estimate}[3]{\hat{{#1}}^{{#2}}_{{#3}}}
\newcommand{\state}[3]{{{#1}}^{{#2}}_{{#3}}}
\newcommand{\error}[3]{\Tilde{{#1}}^{{#2}}_{{#3}}}
\newcommand{\statediff}[5]{{#1}^{#2}_{#3}-{#4}^{#5}_{#3}}
\newcommand{\kc}[2]{{#1}_{#2}^{\text{kc}}}
\begin{document}

\title{Robust Estimation and Control for Heterogeneous Multi-agent Systems Based on Decentralized $k$-hop Prescribed Performance Observers}

\author{Tommaso Zaccherini, Siyuan Liu and Dimos V. Dimarogonas
	\thanks{This work was supported in part by the Wallenberg AI, Autonomous Systems and
Software Program (WASP) funded by the Knut and Alice Wallenberg (KAW) Foundation, Horizon Europe EIC project SymAware (101070802), 
the ERC LEAFHOUND Project, and the Swedish Research Council (VR).
}
\thanks{Tommaso Zaccherini and Dimos V. Dimarogonas are with the Division
of Decision and Control Systems, KTH Royal Institute of Technology, Stockholm, Sweden.
	E-mail: {\tt\small \{tommasoz, dimos\}@kth.se}. Siyuan Liu is with the Department of Electrical Engineering, Control Systems Group, Eindhoven University of Technology, the Netherlands. Email: {\tt\small s.liu5@tue.nl}
}
}

\maketitle
\begin{abstract}
We propose decentralized $k$-hop Prescribed Performance State and Input Observers for heterogeneous multi-agent systems subject to bounded external disturbances. In the proposed input/state observer, each agent estimates the state and input of agents located two or more hops away using only local information exchanged with $1$-hop neighbors, while guaranteeing that transient estimation errors satisfy predefined performance bounds. Conditions are established under which the input observer can be omitted, allowing the state observer convergence to be independent of the input estimates.
Theoretical analysis demonstrates that if a closed-loop controller with full state knowledge achieves the control objective and the estimation-based closed-loop system is set–Input to State Stable (set-ISS) with respect to the goal set, then the estimated states can be used to achieve the system objective with an arbitrarily small \mbox{worst-case} error governed by the accuracy of the states estimates. Simulation results are provided to validate the proposed approach.
\end{abstract}


\section{Introduction}

Heterogeneous \mbox{multi-agent} systems (MAS) consist of multiple autonomous agents with diverse dynamics, sensing, and computational capabilities that cooperate to achieve common objectives \cite{eb184279-05f5-3acc-a300-750c6f4a17e8}. Unlike homogeneous MAS, where identical agents limit adaptability, heterogeneous configurations integrate complementary resources—such as aerial-ground collaboration, distributed sensing and computation—to accomplish complex missions with enhanced efficiency, robustness, and fault tolerance. This diversity, however, increases coordination and estimation challenges, especially under limited communication or sensing. Rather than assuming perfect global state sharing, enabling each agent to estimate the state of other agents beyond its immediate neighbors can significantly improve cooperative performance and resilience.

Research on distributed estimation and observer-based control for MAS has produced a variety of approaches \cite{ZHAO201322, XU2013334, LI2011510, ACIKMESE20141037, 7440802, 6308694, 11187218,11018605, 8264361}. \mbox{Observer-based} control schemes \cite{ZHAO201322, XU2013334} generally achieve consensus or tracking for specific system classes but are tailored to particular control objectives and lack theoretical guarantees when integrated with other controllers. Although distributed observers and \mbox{consensus-based} filters \cite{ACIKMESE20141037, 6308694,8264361} enable local estimation through neighbor communication, they often assume homogeneous and \mbox{disturbance-free} settings, lack predefined estimation performance guarantees, and typically require each agent to reconstruct the full network state, thereby limiting scalability in large networks.

To overcome these limitations, our previous work \cite{11018605} introduced a \mbox{$k$-hop} Distributed Prescribed Performance Observer (\mbox{$k$-hop} DPPO) for homogeneous, \mbox{disturbance-free} MAS, enabling each agent to estimate the state of agents that are two---or more---hops away using only $1$-hop communication. While this approach guarantees predefined estimation performance, it relies on \mbox{network-dependent} gain tuning and prior knowledge of input estimation error bounds, which are often difficult to determine in \mbox{large-scale} or complex networks and usually require centralized information. Moreover, it is formulated for homogeneous disturbance-free systems which limits its applicability to realistic heterogeneous scenarios.

Motivated by these challenges and inspired by Prescribed Performance Control (PPC) \cite{4639441}, we propose decentralized $k$-hop Prescribed Performance State and Input Observers (\mbox{$k$-hop} PPSO and \mbox{$k$-hop} PPIO) for heterogeneous MAS subject to bounded disturbances. The proposed observers enable each agent to estimate the state and input of agents located two to $k$ hops away, while ensuring that the estimation errors satisfy predefined performance specifications set at the design stage.
 Unlike conventional distributed observers \cite{11187218,11018605,6308694,8264361}, the proposed framework is fully decentralized, relying solely on local (\mbox{$1$-hop}) communication without requiring any  global network information, input bounds, or assumptions of homogeneous agent dynamics. This purely local interaction ensures scalability and facilitates deployment in large heterogeneous networks.
 Moreover, the prescribed performance formulation inherently guarantees robustness against bounded disturbances and model uncertainties, ensuring desired transient and \mbox{steady-state} behavior of the estimation errors.
Beyond the observer design, we identify conditions under which the state observer can be simplified by removing the \mbox{$k$-hop} PPIO. Finally, we show that feedback controllers ensuring \mbox{set-ISS} stability of the \mbox{closed-loop} system can preserve their control objectives, with an arbitrarily small \mbox{worst-case} error, even when nonlocal state information is replaced by locally estimated counterparts.

The remainder of the paper is organized as follows.
Section~\ref{Preliminaries} introduces the notation, preliminaries, and problem formulation.
Section~\ref{Disagreement dynamics} defines the disagreement vectors among the agents' estimates and derives their dynamics.
Section~\ref{state observer} presents the proposed \mbox{$k$-hop} Prescribed Performance State Observer and the conditions under which it can be simplified.
Section~\ref{input observer} introduces the \mbox{$k$-hop} Prescribed Performance Input Observer.
Section~\ref{Closed-loop} describes the feedback control structure and establishes the conditions under which the \mbox{$k$-hop} \mbox{estimation-based} feedback controller guarantees convergence to the team objective.
Section~\ref{Simulations} demonstrates the effectiveness of the proposed approach through simulation results, and Section~\ref{Conclusion and Future work} concludes the paper with final remarks and directions for future work.


\section{Preliminaries and Problem Setting}\label{Preliminaries}
\textbf{Notation:} Denote by $\mathbb{R}$, $\mathbb{R}{\ge 0}$, and $\mathbb{R}{> 0}$ the sets of real, nonnegative, and positive real numbers, respectively. $\mathbb{R}^n$ represents the $n$-dimensional Euclidean space, and $\mathbb{R}^{n \times m}$ denotes the set of real matrices with $n$ rows and $m$ columns. Denote by $I_n$ the identity matrix of size $n$ and by $1_n$ the vector of ones of size $n$. Let $|S|$, $S^c$ and $\partial S$ be the cardinality, the complement and the boundary of a set $S$, and denote with $\bigtimes^N _{i=1} S_i$ the Cartesian product of $N$ sets $\{S_1,\dots, S_N\}$. Furthermore, let $\max_{i\in\{1,\dots,n\}} \{s_i\}$ and $\min_{i\in\{1,\dots,n\}} \{s_i\}$ denote the maximum and minimum element in a set $S  = \{s_1, \dots, s_n\}$, respectively.
Given a symmetric matrix $B \in \mathbb{R}^{n\times n}$, we represent with $\lambda_{\text{min}}(B)$ and $\lambda_{\text{max}}(B)$ respectively the minimum and maximum eigenvalues of $B$, we use $B \succ 0$ to denote a positive definite matrix $B$, and $\norm{B}$ to denote the spectral norm of $B$. Given $x \in \mathbb{R}^n$, $\norm{x}= \sqrt{x^\top x}$. 
Let $\text{diag}(a_1, \dots, a_n)$ be the diagonal matrix with diagonal elements  $a_1, . . . , a_n$ and let $\otimes$ be the Kronecker product. 
We use $f \in \mathcal{C}_1$ to denote that a function $f$ is continuous differentiable in its domain.
We define functions $\mathcal{K}$ and $\mathcal{KL}$ as follows: $\mathcal{K} \!=\! \{\gamma : \mathbb R_{\ge 0}\rightarrow\mathbb R_{\ge 0}   :  \gamma \text{ is continuous, strictly increasing and } \gamma(0)=0\}$;  $\mathcal{KL} \!=\! \{\beta : \mathbb R_{\ge 0} \!\times \mathbb R_{\ge 0} \rightarrow\mathbb R_{\ge 0}  :$ for each fixed $s$, the map  $\beta(r,s) \in \mathcal{K}$  with  respect to  $r$  and, for each fixed  nonzero $r$,  the map $\beta(r,s)$ is decreasing with respect to  $s$  and $\lim_{ s \rightarrow \infty } \beta(r,s) = 0\}$.

\subsection{Multi-agent systems}

Consider a heterogeneous MAS consisting of a set of $N$ interacting agents $\mathcal{V} = \{1,\dots, N\}$. Denote with $\vect{x}$ and $\vect{u}$ the global state and input of the system and suppose each agent $i \in \mathcal{V}$ evolves as:
\begin{equation}
    \label{eq: agent's dynamic}
    \dot x_i(t) =  f_i(x_i(t)) + g_i(u_i(t)) + w_i(\vect{x},t),
\end{equation}
where $x_i \in \mathbb{R}^{n_i}$ and $u_i \in \mathbb{R}^{m_i}$ are the state and input of agent $i$, respectively, $f_i:\mathbb{R}^{n_i} \rightarrow \mathbb{R}^{n_i}$ is the flow drift, $g_i:\mathbb{R}^{m_i} \rightarrow \mathbb{R}^{n_i}$ is an input function, and $w_i: \bigtimes^N _{i=1}\mathbb{R}^{n_i} \times \mathbb{R}_{\geq 0} \rightarrow \mathbb{R}^{n_i}$ represents external disturbances acting on $i$.

Let $n = \sum_{i=1}^{N} n_i$ and $m = \sum_{i=1}^{N} m_i$ be the dimensions of the global state and input. Then, $\vect{x}= \left[x_1^\top, \dots, x_N^\top \right]^\top \in \mathbb{R}^n$ and $ \vect{u}=\left[u_1^\top, \dots, u_N^\top\right]^\top \in \mathbb{R}^m$.
\begin{assumption}
    \label{Assumption on existence of a solution}
    (i) $f_i:\mathbb{R}^{n_i} \rightarrow \mathbb{R}^{n_i}$ is locally Lipschitz; (ii) $g_i:\mathbb{R}^{m_i} \rightarrow \mathbb{R}^{n_i}$ is measurable and essentially locally bounded; (iii) $w_i: \bigtimes^N _{i=1}\mathbb{R}^{n_i} \times \mathbb{R}_{\geq 0} \rightarrow \mathbb{R}^{n_i}$ is continuous and uniformly bounded in $\bigtimes^N _{i=1}\mathbb{R}^{n_i} \times \mathbb{R}_{\geq 0}$.
\end{assumption}
\begin{assumption}
    \label{Assumption on g}
    One of the following holds: (i) $g_i$ is bounded; (ii) $\dot{g}_i$ is bounded.
\end{assumption}
\begin{remark}
    To ensure convergence of the input observer in Section~\ref{input observer} without requiring Assumption~\ref{Assumption on g}-(i) to hold, $g_i(u_i)$ in \eqref{eq: agent's dynamic} is assumed independent of $x_i$.
    Yet, as stated in Remark \ref{Remark on input observer avoided}, under Assumption \ref{Assumption on g}-(i),  $g_i$ can be extended to $g_i(x_i, u_i)$ while preserving the state observer convergence.
\end{remark}

\subsection{Communication graph} 
The interactions among agents are represented by an undirected graph \(\mathcal{G} = (\mathcal{V}, \mathcal{E})\), where \(\mathcal{V}\) is the set of agents, and \(\mathcal{E} \subseteq \mathcal{V} \times \mathcal{V}\) is the set of communication links. An edge $(i,j) \in \mathcal{E}$ indicates that agents $i$ and $j$ can exchange information.
A path between two agents $i, j \in \mathcal{V} $ is defined as a sequence of non-repeating edges connecting $i$ to $j$. Then, a \mbox{$k$-hop} path is a path of length $k$ connecting $i$ and $j$. 

For each agent $i \in \mathcal{V}$, let $\Nhop{i}{k}$ denote the set of \mbox{$k$-hop} neighbors of agent $i$, i.e., of all nodes $N^i_j \in \mathcal{V}$ from which there exists a $p$-hop path to $i$ with $2 \leq p \leq k$. The $\eta_i = |\Nhop{i}{k}|$ elements of this set are denoted as $\Nhop{i}{k}= \{ N^i_1, \dots, N^i_{\eta_i}\}$, where $N^i_j \in \mathcal{V}$, with $j \in \{1, \dots, \eta_i\}$, indicates the global index of the $j$-th \mbox{$k$-hop} neighbor of $i$.
For simplicity, we use \(\mathcal{N}_i\) to indicate the set of direct ($1$-hop) neighbors of agent \(i\), excluding $i$ in presence of self-loops.

Suppose the following assumption hold:
\begin{assumption}
\label{Assumption on neighbors}
    $\mathcal{G}$ is a time invariant undirected graph and each $i\in \mathcal{V}$ knows its neighborhood $\Neigh{i}{}$ and $\Nhop{i}{k}$.
\end{assumption}
\begin{assumption}
    \label{Assumption on 2-hop communication}
    Each agent $i\in \mathcal{V}$ has access and can relay, at each time instant, the state and input of its \mbox{$1$-hop} neighbors $j \in \Neigh{i}{}$ to $\Neigh{i}{}$.
\end{assumption}

Assumption~\ref{Assumption on neighbors} is not restrictive, as distributed neighborhood discovery algorithms have been extensively studied in the sensor network literature \cite{994183}.
Furthermore, Assumption~\ref{Assumption on 2-hop communication} is satisfied in scenarios where each agent can measure the states of its $1$-hop neighbors using onboard sensors and share this information with its neighbors.
\begin{remark}
    \label{Remark on validity of Assumption 4}
    If Assumption \ref{Assumption on 2-hop communication} does not hold, as will be explained later in Remark \ref{Remark on local disagreement vector computation and positive definiteness of Mik}, the proposed approach can still be applied by including the direct \mbox{$1$-hop} neighbors in the definition of \mbox{$k$-hop} neighbors. Note that, by definition of $\Neigh{i}{}$, $i$ is excluded also from its new $k$-hop neighborhood. 
\end{remark}

\subsection{State and input estimates}
\label{Section: state and input estimate}
Let $\vect{x}^i$ and $\state{\vect{g}}{i}{}$ denote the stack vectors containing the state and input function of the \mbox{$k$-hop} neighbors of agent $i$, i.e., of $N_j^i \in \Nhop{i}{k}$:
\begin{equation}
    \label{stack vector of real values estimated by agent i}
    \state{\vect{x}}{i}{} = \left[\state{x}{\top}{N_1^i}, \dots ,\state{x}{\top}{N^i_{\eta_i}} \right]^\top, \ \state{\vect{g}}{i}{} = \left[\state{g}{\top}{N_1^i}, \dots ,\state{g}{\top}{N^i_{\eta_i}} \right]^\top
\end{equation}
and let $\estimate{\vect{x}}{i}{} = \left[\estimate{x}{i \ \top}{N_1^i}, \dots ,\estimate{x}{i \ \top}{N^i_{\eta_i}} \right]^\top$ and $\estimate{\vect{g}}{i}{} = \left[\estimate{g}{i \ \top}{N_1^i}, \dots ,\estimate{g}{i \ \top}{N^i_{\eta_i}} \right]^\top$
be their estimates carried out by the agent~$i$, i.e., $\estimate{x}{i}{N_j^i}$ and $\estimate{g}{i}{N_j^i}$, for $N_j^i \in \Nhop{i}{k}$, are the estimates of the state $\state{x}{}{N_j^i}$ and input function $\state{g}{}{N_j^i}$ of agent $N_j^i$ done by $i$. Moreover, denote with  $\tilde{\vect{x}}^i$ and $\tilde{\vect{g}}^i$ the corresponding estimation errors:
\begin{equation}
    \label{Definition: error on input and state estimation perfromed by i}
    \error{\vect{x}}{i}{} =\left[\error{x}{i \ \top}{N_1^i}, \dots ,\error{x}{i \ \top}{N^i_{\eta_i}}\right]^\top, \error{\vect{g}}{i}{} =\left[\error{g}{i \ \top}{N_1^i}, \dots ,\error{g}{i \ \top}{N^i_{\eta_i}}\right]^\top,
\end{equation}
where $\error{x}{i}{N_j^i} = \estimate{x}{i}{N_j^i} -\state{x}{}{N_j^i}$ and  $\error{g}{i}{N_j^i} = \estimate{g}{i}{N_j^i} -\state{g}{}{N_j^i}$ for all $N_j^i \in \Nhop{i}{k}$.

Let $\vect{x}_i$ and $\vect{g}_{i}$ be the vectors defined as $\vect{x}_i = 1_{\eta_i} \otimes x_i$ and $\vect{g}_{i} = 1_{\eta_i} \otimes g_i(u_i)$,
and let:
\begin{equation}
    \label{eq: estimation of state and input of agent i}
    \hspace{-0.2cm}\estimate{\vect{x}}{}{i} = \left[\estimate{x}{N_1^i \top}{i}, \dots , \estimate{x}{N_{\eta_i}^i \top}{i} \right]^\top\hspace{-0.1cm},
    \estimate{\vect{g}}{}{i} = \left[\estimate{g}{N_1^i \top}{i}, \dots , \estimate{g}{N_{\eta_i}^i\top}
    {i} \right]^\top\hspace{-0.1cm},
\end{equation}
be the stacked vectors containing the estimates of $x_i$ and $g_i$ computed by the \mbox{$k$-hop} neighbors of agent $i$, i.e.,
$\estimate{x}{N_j^i}{i}$ and $\estimate{g}{N_j^i}{i}$, for $j \in \{1, \dots, \eta_i\}$, are the estimates of $x_i$ and $g_i$ performed by agent $N_j^i \in \Nhop{i}{k}$.

As in \eqref{Definition: error on input and state estimation perfromed by i}, indicate with $\error{\vect{x}}{}{i} = \estimate{\vect{x}}{}{i} -\vect{x}_i$ and $\error{\vect{g}}{}{i} = \estimate{\vect{g}}{}{i} -\vect{g}_i$ the estimation errors computed by each $N_j^i \in \Nhop{i}{k}$, i.e.:
\begin{equation}
    \label{Definition: error on input and state estimation of agent i}
        \hspace{-0.2cm}\error{\vect{x}}{}{i} = \left[\error{x}{N_1^i \top}{i}, \dots ,\error{x}{N_{\eta_i}^i \top}{i} \right]^\top\hspace{-0.1cm}, \error{\vect{g}}{}{i}=\left[\error{g}{N_1^i \top}{i}, \dots , \error{g}{N_{\eta_i}^i\top}
    {i} \right]^\top\hspace{-0.1cm}, 
\end{equation}
with $\error{x}{N_j^i}{i} = \estimate{x}{N_j^i}{i} - x_i$ and $\error{g}{N_j^i}{i} = \estimate{g}{N_j^i}{i} - g_i$.

 
To simplify the notation, we assume without loss of generality that $n_i=1$ for all $i\in \mathcal{V}$ in the following sections. Nonetheless, the results can be extended to higher dimensional case by appropriate use of the Kronecker product.

\subsection{Problem formulation}
\label{Problem formulation}
For every agent $i \in \mathcal{V}$ and for all $N_j^i \in \Nhop{i}{k}$, let $\delta^{N_j^i}_i :\mathbb{R}_{\geq 0} \rightarrow \mathbb{R}$ be a prescribed performance function that is used to capture the predefined performance bounds for the estimation errors, as defined in the following:
\begin{definition}
    \label{Definition of Prescribed performance function}
    A function $\rho :\mathbb{R}_{\geq 0} \rightarrow \mathbb{R}$ is a prescribed performance function if it satisfies, for all $t \in \mathbb{R}_{\geq 0}$: (i) $\rho(t) \in \mathcal{C}^1$; (ii)  $0 <\rho(t) \leq \overline{\rho}$ for some $ \overline{\rho} <  \infty $ and (iii) $|\dot{\rho}(t)|\leq \dot{\overline{\rho}}$ for some $\dot{\overline{\rho}}<\infty$. 
\end{definition}

One conventional choice of prescribed performance function is the decreasing exponential function 
\begin{equation}
    \label{eq: choice of prescribed performance function}
    \rho(t)= (\rho(0)-\rho(\infty)) e^{-l t} + \rho(\infty),
\end{equation}
where $\rho(0)$ and $\rho(\infty)$ denote the initial and steady-state values, and $l > 0$ specifies the decay rate.

Then, the goal of this work is formulated as follows.
\begin{problem}
    \label{Problem state estimation error convergence}
     Given the heterogeneous MAS in \eqref{eq: agent's dynamic} communicating over a graph $\mathcal{G}$, and prescribed performance functions $\delta^{N_j^i}_i(t)$, design a decentralized \mbox{$k$-hop} observer such that its estimation errors satisfy the prescribed performance requirements $|\error{x}{N_j^i}{i}(t)| <\delta^{N_j^i}_i(t)$ for all $i \in \mathcal{V}$ and all $N_j^i \in \Nhop{i}{k}$. Furthermore, given a team control objective for the MAS, derive sufficient conditions under which the \mbox{observer-based} decentralized controllers $u_i$ achieve the team objective with arbitrarily small error while using the state estimates.
\end{problem}


\section{Disagreement dynamics}\label{Disagreement dynamics}
For each $i \in \mathcal{V}$ and $N_j^i \in \Nhop{i}{k}$, define the disagreement term $\xi_i^{N_j^i}$ on the estimate of $x_i$ performed by $N_j^i$ as:
\begin{equation}
    \label{eq: disagreement vector on the estimate of agent i}
    \xi_i^{N_j^i} =\sum_{l\in(\Neigh{N_j^i}{} \cap \Nhop{i}{k})} (\statediff{\hat{x}}{N_j^i}{i}{\hat{x}}{l}) + |\Neigh{N_j^i}{} \cap \Neigh{i}{}|  (\hat{x}^{N_j^i}_i - x_i),
\end{equation}
where $\xi_i^{N_j^i}$ represents a local disagreement term capturing how the estimate $\hat{x}^{N_j^i}_i$ differs from: (i) the true state information $x_i$ shared by the agents $l \in \Neigh{N^i_{j}}{} \cap \Neigh{i}{}$ and (ii) the state estimate $\estimate{x}{l}{i}$ shared by those agents $l \in \Neigh{N^i_{j}}{} \cap \Nhop{i}{k}$.

\subsection{Disagreement vector and problem reformulation}
\label{Section: disagreement vector and problem reformulation}
By stacking the disagreement components $\xi_i^{N_j^i}$ for all $N_j^i \in \Nhop{i}{k}$, and using the state estimation error definition in \eqref{Definition: error on input and state estimation of agent i}, the \textbf{disagreement vector} defined as $\vect{\xi}_i := \Bigl[\xi_{i}^{N_1^i}, \dots ,\xi_{i}^{N^i_{\eta_i}}\Bigr]^\top$ can be expressed as:

\begin{equation}
    \label{eq: disagreement vector expression}
    \vect{\xi}_i = (\kc{L}{i} + \kc{H}{i}) \error{\vect{x}}{}{i} = \kc{M}{i} \error{\vect{x}}{}{i},
\end{equation}
where the matrix $\kc{L}{i}$ is the Laplacian matrix of the sub-graph $\mathcal{G}_i = ( \Nhop{i}{k}, \mathcal{E}_i)$ induced by the \mbox{$k$-hop} neighbors of agent $i$, with $\mathcal{E}_i = \{(p,q) \in \mathcal{E}: \{p,q\} \subseteq \Nhop{i}{k}\}$, $\kc{H}{i}:= \text{diag}(|\Neigh{N_1^i}{}\cap \Neigh{i}{}|, \dots, |\Neigh{N_{\eta_i}^i}{}\cap \Neigh{i}{}|) \in \mathbb{R}^{\eta_i \times \eta_i}$ and $M_i^{kc} \in \mathbb{R}^{\eta_i \times \eta_i}$ is defined as $\kc{M}{i} = \kc{L}{i} + \kc{H}{i}$.

\begin{lemma}[\!\!\cite{11187218}]
    \label{Lemma on the positive definiteness of the matrix M}
    If $\mathcal{G}$ is connected, then $\kc{M}{i} \succ 0 $ for all $ i \in \mathcal{V}$ with $ \Nhop{i}{k} \neq \emptyset$.
\end{lemma}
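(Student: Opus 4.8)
The plan is to prove positive definiteness directly from the quadratic form, by showing that $\kc{M}{i}$ is positive semidefinite and has trivial kernel. First I would exploit the decomposition $\kc{M}{i} = \kc{L}{i} + \kc{H}{i}$ and write, for any $v \in \mathbb{R}^{\eta_i}$,
\[
v^\top \kc{M}{i} v = v^\top \kc{L}{i} v + v^\top \kc{H}{i} v = \sum_{(p,q) \in \mathcal{E}_i}(v_p - v_q)^2 + \sum_{m=1}^{\eta_i} h_m\, v_m^2,
\]
where $h_m := |\Neigh{N_m^i}{} \cap \Neigh{i}{}| \ge 0$ is the $m$-th diagonal entry of $\kc{H}{i}$. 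Both terms are manifestly nonnegative (the first is the standard Laplacian form of the induced subgraph $\mathcal{G}_i$, the second a sum of nonnegative weights), so $\kc{M}{i} \succeq 0$ and the whole task reduces to ruling out a nonzero kernel vector.

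Next I would characterize the vectors $v$ achieving $v^\top \kc{M}{i} v = 0$. Vanishing of the Laplacian term forces $v_p = v_q$ along every edge of $\mathcal{G}_i$, i.e. $v$ is constant on each connected component of $\mathcal{G}_i$; vanishing of the second term forces $v_m = 0$ whenever $h_m > 0$, i.e. at every index that is \emph{grounded} in the sense that $N_m^i$ shares at least one common $1$-hop neighbor with agent $i$. Consequently, it suffices to show that every connected component of $\mathcal{G}_i$ contains at least one grounded node: constancy then propagates the value $0$ from that node across the entire component, giving $v = 0$.

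The main obstacle, and the only place the connectivity of $\mathcal{G}$ enters, is exactly this anchoring claim. My plan here is a shortest-path argument. Fix a connected component $\mathcal{C}$ of $\mathcal{G}_i$, pick any $p \in \mathcal{C}$, and take a shortest path $i = u_0, u_1, \dots, u_d = p$ in $\mathcal{G}$, which exists by connectivity; since $p \in \Nhop{i}{k}$ lies at least two hops from $i$ we have $2 \le d \le k$. Because the path is geodesic, each $u_j$ satisfies $d(i,u_j) = j$, so $u_2, \dots, u_d$ all lie at distance between $2$ and $k$ from $i$ and hence belong to $\Nhop{i}{k}$; consecutive ones are adjacent in $\mathcal{G}$ and therefore in $\mathcal{G}_i$, so $u_2, \dots, u_d$ form a path inside $\mathcal{C}$. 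In particular $u_2 \in \mathcal{C}$, and since $u_2$ is adjacent to the $1$-hop neighbor $u_1$ of $i$, that vertex satisfies $u_1 \in \Neigh{u_2}{} \cap \Neigh{i}{}$, whence $h_{u_2} \ge 1 > 0$ and $u_2$ is grounded. This supplies the required grounded node in every component.

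Assembling the pieces, any kernel vector is constant on each component and vanishes at that component's grounded node, hence vanishes identically, so $v^\top \kc{M}{i} v > 0$ for all $v \neq 0$ and $\kc{M}{i} \succ 0$ for every $i$ with $\Nhop{i}{k} \neq \emptyset$. I expect the genuinely delicate step to be the geodesic argument (ensuring the intermediate vertex $u_2$ stays within the same component and is grounded); the algebraic semidefiniteness is routine. A minor bookkeeping adjustment would be needed if $\Nhop{i}{k}$ is taken to include the $1$-hop neighbors themselves, but the grounding mechanism is unchanged.
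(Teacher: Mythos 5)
The paper states this lemma by citation to \cite{11187218} and gives no in-paper proof, so there is nothing internal to compare against; your argument is a correct, self-contained proof and is essentially the canonical one for results of this ``grounded Laplacian'' type. The three ingredients are all sound: the quadratic form $v^\top \kc{M}{i} v = v^\top \kc{L}{i} v + v^\top \kc{H}{i} v \geq 0$ with the standard Laplacian identity on the induced subgraph $\mathcal{G}_i$; the kernel characterization (constant on each component of $\mathcal{G}_i$, zero wherever $|\Neigh{N_m^i}{} \cap \Neigh{i}{}| > 0$); and the geodesic argument showing every component of $\mathcal{G}_i$ contains a grounded vertex. The last step is the one that genuinely uses connectivity of $\mathcal{G}$, and you handle it correctly: on a shortest path $i = u_0, u_1, \dots, u_d = p$ the vertices $u_2, \dots, u_d$ all lie at distance between $2$ and $k$ from $i$, hence in $\Nhop{i}{k}$, and form a path inside the component of $p$, while $u_2$ shares the common neighbor $u_1$ with $i$ and is therefore grounded (with the degenerate case $d=2$, i.e.\ $p = u_2$, handled automatically). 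Your closing remark about the variant in which $\Nhop{i}{k}$ also contains the $1$-hop neighbors is consistent with Remark~\ref{Remark on local disagreement vector computation and positive definiteness of Mik}, where the grounding entries become $h^{N_j^i}_i = 1$ for $N_j^i \in \Neigh{i}{}$ and the same kernel argument goes through. The only cosmetic issue is a possible factor of two in the Laplacian sum depending on whether $\mathcal{E}_i$ is taken as ordered or unordered pairs, which affects neither semidefiniteness nor the kernel.
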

\begin{remark}
    \label{Remark on local disagreement vector computation and positive definiteness of Mik}
    If Assumption~\ref{Assumption on 2-hop communication} does not hold, then due to the unavailability of $x_i$, \eqref{eq: disagreement vector on the estimate of agent i} cannot be computed locally under the proposed $k$-hop definition. Nevertheless, as introduced in Remark \ref{Remark on validity of Assumption 4}, \eqref{eq: disagreement vector on the estimate of agent i} can still be evaluated locally, and the positive definiteness of $\kc{M}{i} = \kc{L}{i} + \kc{H}{i}$ can be preserved by extending the definition of the \mbox{$k$-hop} neighborhood to include the $1$-hop neighbors of each agent. In this case, $\kc{L}{i}$ remains the Laplacian matrix of the subgraph induced by the \mbox{$k$-hop} neighbors of agent~$i$, and $\kc{H}{i} := \text{diag}(h^{N_1^i}_i, \dots, h^{N_{\eta_i}^i}_i) \in \mathbb{R}^{\eta_i \times \eta_i}$, where $h^{N_j^i}_i = 1$ if $N_j^i \in \Neigh{i}{}$.
\end{remark}
\begin{lemma}
    \label{Lemma on boundness of the norm of prescibed performance functions}
    Let $\vect{\rho} (t) = \bigl[\rho_1(t), \dots , \rho_m(t)\bigr]^\top$ be a vector whose components $\rho_i$, $i \in \{1, \dots, m\}$, are prescribed performance functions as per Definition \ref{Definition of Prescribed performance function}. Then, $\norm{\vect{\rho}(t)}: \mathbb{R}_{\geq 0} \rightarrow \mathbb{R}$ is itself a prescribed performance function satisfying conditions (i)-(iii) in Definition \ref{Definition of Prescribed performance function}.
\end{lemma}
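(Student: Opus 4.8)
The plan is to verify directly that $\norm{\vect{\rho}(t)}$, understood as the Euclidean norm $\bigl(\sum_{i=1}^m \rho_i(t)^2\bigr)^{1/2}$, satisfies each of the three defining conditions (i)--(iii) of a prescribed performance function, invoking only the fact that every component $\rho_i$ already satisfies them: each $\rho_i$ is smooth and strictly positive, strictly decreasing, and converges to a strictly positive limit $\rho_{i,\infty}$ as $t \to \infty$.

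For condition (i) (smoothness and positivity), I would first observe that, since $\rho_i(t) > 0$ for every $i$, the quantity $s(t) := \sum_{i=1}^m \rho_i(t)^2$ is strictly positive and, by monotonicity, bounded below by $\sum_{i=1}^m \rho_{i,\infty}^2 > 0$. As a finite sum of products of smooth functions, $s(t)$ is itself smooth, and because it is bounded away from zero the composition $\sqrt{s(t)}$ is smooth as well, the square root being smooth on $(0,\infty)$. Hence $\norm{\vect{\rho}(t)}$ is smooth and positive.

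For condition (ii) (monotone decrease), I would differentiate to obtain $\frac{d}{dt}\norm{\vect{\rho}(t)} = \frac{\sum_{i=1}^m \rho_i(t)\dot\rho_i(t)}{\norm{\vect{\rho}(t)}}$. Each term of the numerator satisfies $\rho_i(t) > 0$ and $\dot\rho_i(t) < 0$, so the numerator is strictly negative while the denominator is strictly positive; the derivative is therefore strictly negative and $\norm{\vect{\rho}(t)}$ is strictly decreasing. For condition (iii), I would use continuity of the norm together with the existence of the componentwise limits to conclude that $\lim_{t\to\infty}\norm{\vect{\rho}(t)} = \bigl(\sum_{i=1}^m \rho_{i,\infty}^2\bigr)^{1/2}$, which is strictly positive because each $\rho_{i,\infty} > 0$.

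The argument is elementary and I do not anticipate a substantive obstacle; the only point requiring care is the smoothness claim in (i), where one must exploit the strict lower bound $s(t) \ge \sum_{i=1}^m \rho_{i,\infty}^2 > 0$ so that the square root is never evaluated at the origin, the single point at which it fails to be differentiable. Everything else reduces to monotonicity and limit properties inherited componentwise from the $\rho_i$.
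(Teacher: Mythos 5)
Your treatment of condition (i) coincides with the paper's: both arguments use that the Euclidean norm is smooth away from the origin and that $\vect{\rho}(t)$ never vanishes because each component is strictly positive (pointwise non-vanishing already suffices for this; the uniform lower bound you extract from monotonicity is not needed). The difficulty lies in conditions (ii) and (iii), for which you have guessed the wrong content. In this paper a prescribed performance function is not defined by strict decrease and convergence to a positive limit; as the paper's own proof and its later uses of Definition \ref{Definition of Prescribed performance function} make explicit, the conditions are (ii) uniform boundedness, $0 < \rho_i(t) \le \overline{\rho}_i < \infty$, and (iii) boundedness of the derivative, $|\dot{\rho}_i(t)| \le \dot{\overline{\rho}}_i < \infty$. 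The paper's proof accordingly establishes $0 < \norm{\vect{\rho}(t)} \le \bigl(\sum_{i=1}^{m} \overline{\rho}_i^{\,2}\bigr)^{1/2}$ for (ii), and for (iii) bounds $\frac{d}{dt}\norm{\vect{\rho}(t)} = \vect{\rho}^\top(t)\dot{\vect{\rho}}(t)/\norm{\vect{\rho}(t)}$ using the strict positivity of the denominator together with the boundedness of each $\rho_i$ and $\dot{\rho}_i$.

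Your proof establishes neither of these. The strict-decrease argument rests on $\dot{\rho}_i(t) < 0$, which the definition does not assert; and even granting your hypotheses, ``strictly decreasing and convergent'' does not imply that $\frac{d}{dt}\norm{\vect{\rho}(t)}$ is bounded (a monotone convergent $\mathcal{C}^1$ function can still have derivative spikes of arbitrarily large magnitude), yet bounded derivative is exactly the property the lemma is invoked for downstream, e.g.\ to bound $\norm{\dot{\vect{P}}_{i}\vect{e}_i}$ in the proof of Theorem \ref{Theorem: main theorem on state estimation convergence}. So the gap is not an internal error in your computations but a mismatch between the properties you verify and the properties the lemma asserts. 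The repair is short: for (ii) use the componentwise upper bounds to get $\norm{\vect{\rho}(t)} \le \bigl(\sum_{i=1}^{m}\overline{\rho}_i^{\,2}\bigr)^{1/2}$, and for (iii) note that $\bigl|\frac{d}{dt}\norm{\vect{\rho}(t)}\bigr| = |\vect{\rho}^\top(t)\dot{\vect{\rho}}(t)|/\norm{\vect{\rho}(t)} \le \norm{\dot{\vect{\rho}}(t)}$ by Cauchy--Schwarz, which is finite since every $|\dot{\rho}_i|$ is bounded.
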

\begin{proof}
    (i) Each $\rho_i(t)$ is positive and continuously differentiable ($\mathcal{C}^1$) by definition, hence $\vect{\rho}(t) \in \mathcal{C}^1$. The Euclidean norm is smooth on $\mathbb{R}^{\eta_i} \setminus \{0\}$, and since $\vect{\rho}(t) \neq 0$ for all $t \ge 0$, it follows that $\norm{\vect{\rho}(t)} \in \mathcal{C}^1$. (ii) From Definition~\ref{Definition of Prescribed performance function}, each component satisfies $0 <\rho_i(t) \leq \overline{\rho}_i$ for some $\overline{\rho}_i <  \infty$. Hence, $0 < \norm{\vect{\rho}(t)} \leq \overline{\vect{\rho}}$, with $\overline{\vect{\rho}} = \sqrt{ \sum_{i=1}^{m}(\overline{\rho}_i)^2}$. (iii) Differentiating the Euclidean norm gives
     $\frac{d\norm{\vect{\rho}(t)}}{dt} = \frac{\vect{\rho}^\top(t) \dot{\vect{\rho}}(t)}{\norm{\vect{\rho}(t)}}$. Since $\norm{\vect{\rho}(t)} > 0$, $0 <\rho_i(t) <\infty$, and $|\dot{\rho}_i(t)| <\infty$ for all $i \in \{1, \dots, m\}$, it follows that $\frac{d\norm{\vect{\rho}(t)}}{dt}$ is upper bounded.
\end{proof}

Given the validity of Lemma \ref{Lemma on the positive definiteness of the matrix M}, $\kc{M}{i}$ is always invertible and $\error{\vect{x}}{}{i} = {\kc{M}{i}}^{-1} \vect{\xi}_i$. Thus, from the submultiplicative property, $\norm{\error{\vect{x}}{}{i}} \leq \norm{{\kc{M}{i}}^{-1}} \norm{\vect{\xi}_i}$ holds and $\norm{\error{\vect{x}}{}{i}}$ satisfies $\norm{\error{\vect{x}}{}{i}(t)} \leq \lambda_{\text{min}}^{-1}(\kc{M}{i}) \norm{\vect{\xi}_i(t)}$.

Since $|\error{x}{N_j^i}{i}(t)| \leq \norm{\error{\vect{x}}{}{i}(t)}$ holds from the norm definition, to satisfy Problem \ref{Problem state estimation error convergence} and ensure $|\error{x}{N_j^i}{i}(t)| < \delta_i^{N_j^i}(t)$ for all $N_j^i \in \Nhop{i}{k}$, it suffices to impose $\norm{\vect{\xi}_i(t)} < \lambda_{\text{min}}(\kc{M}{i}) \min_{j \in \{1,\dots, \eta_i\}} \{\delta_i^{N_j^i}(t)\}$ by constraining the evolution of $\xi_i^{N_j^i}$ to satisfy $|\xi_i^{N_j^i}| < \rho_i^{N_j^i}(t)$ for all $N_j^i \in \Nhop{i}{k}$, where each $\rho_i^{N_{j}^i}(t)$ is a prescribed performance function selected such that the norm of $\vect{\rho}_i(t)= \Bigl[\rho_i^{N_1^i}(t),\dots , \rho_i^{N_{\eta_i}^i(t)}\Bigr]^\top$ satisfies $\left\lVert\vect{\rho}_i(t)\right\rVert \leq \lambda_{\text{min}}(\kc{M}{i}) \min_{j \in \{1,\dots, \eta_i\}} \{\delta_i^{N_j^i}(t)\}$. 
As a result, Problem \ref{Problem state estimation error convergence} can be partially reformulated as:
\begin{problem}
    \label{Problem state estimation reformulated}
     Given the heterogeneous MAS in \eqref{eq: agent's dynamic} communicating over a graph $\mathcal{G}$, and prescribed performance functions $\delta^{N_j^i}_i(t)$, design a decentralized \mbox{$k$-hop} observer such that the disagreement dynamics satisfy $|\xi_i^{N_j^i}| < \rho_i^{N_j^i}$ for all $i \in \mathcal{V}$ and all $N_j^i \in \Nhop{i}{k}$, where each $\rho_i^{N_j^i}$ is a prescribed performance function designed so that $\lVert\vect{\rho}_i(t)\rVert \leq \lambda_{\text{min}}(\kc{M}{i}) \min_{j \in \{1,\dots, \eta_i\}} \{\delta_i^{N_j^i}(t)\}$ holds for all $t \in \mathbb{R}_{\geq 0}$.
\end{problem}

\begin{remark}
    Note that, since $\rho_i^{N_j^i}(t)$ are design choices, we can indirectly impose desired behavior to every estimation error $\error{x}{N_j^i}{i}(t)$ by tuning the parameters of $\rho_i^{N_j^i}(t)$.
\end{remark}

For \mbox{multi-dimensional} case, where $n_i \neq 1$, this reasoning can be performed on every component of the agent's state. Hence, desired performance can be imposed on the convergence of every disagreement component of $\xi_{i}^{N_j^i}$, i.e., on every $\xi_{i,l}^{N_j^i}$, with $l \in \{1, \dots, n_i\}$.

\subsection{Prescribed Performance Observer}
\label{Section: Prescribed Performance Observer}
Inspired by the PPC literature \cite{4639441}, we design a $k$-hop Prescribed Performance Observer (PPO) that constrains the disagreement dynamics $\xi_i^{N_j^i}$ to satisfy
\begin{equation}
    \label{eq: definition of the bounds on the disagreement term}
    -\rho^{N_j^i}_i(t) < \xi_{i}^{N_j^i}(t) < \rho^{N_j^i}_i(t)
\end{equation} 
for all $t \in \mathbb{R}_{\geq 0}$, $i \in \mathcal{V}$, and $N_j^i \in \Nhop{i}{k}$, where $\rho^{N_j^i}_i(t)$ is a prescribed performance function, as defined in \eqref{eq: choice of prescribed performance function}, satisfying $\rho^{N_j^i}_i(0) \geq \rho^{N_j^i}_i(\infty) > 0$ and $\rho^{N_j^i}_i(0) > |\xi_{i}^{N_j^i}(0)|$. Given the initial condition $\rho^{N_j^i}_i(0)$, the value $\rho^{N_j^i}_i(\infty) = \lim_{t \rightarrow \infty} \rho^{N_j^i}_i(t)$ represents the maximum allowable magnitude of the disagreement vector at steady state.

Let $e_i^{N_j^i} \in (-1,1)$ denote the normalization of $\xi_{i}^{N_j^i}(t)$ with respect to $\rho^{N_j^i}_i$, i.e., $e_i^{N_j^i} =\rho^{N_j^i}_i(t)^{-1}\xi_{i}^{N_j^i}$, and let $T: (-1,1) \rightarrow \mathbb{R}$ be a strictly increasing transformation satisfying $T(0) = 0$.
For all $i \in \mathcal{V}$ and $N_j^i\in \Nhop{i}{k}$, define the transformed normalized disagreement as:
\begin{equation}
\vspace{-0.05cm}
    \label{eq: transformed normalized disagreement}
    \epsilon^{N_j^i}_i = T(e^{N_j^i}_i) = T(\rho^{N_j^i}_i(t)^{-1}\xi_{i}^{N_j^i}).
\end{equation}
In this work, we select $T(e) = \ln(\frac{1+e}{1-e})$, which has a strictly positive derivative $ J_T(e) = \frac{2}{1-e^2}$.
Then, by defining the transformed normalized disagreement vector as $\vect{\epsilon}_{i}  := \left[\epsilon^{N_1^i}_i, \dots ,\epsilon_{i}^{N^i_{\eta_i}} \right]^\top$, its dynamics result into:
\begin{equation}
    \label{eq: disagreement vector dynamics on agent i state}
     \dot{\vect{\epsilon}}_i = \vect{J}_i  \vect{P}^{-1}_{i}(\dot{\vect{\xi}}_{i} - \dot{\vect{P}}_{i} \vect{e}_i),
\end{equation}
where $\vect{J}_i = \text{diag}\left(J_T(e^{N_1^i}_i), \dots, J_T(e^{N^i_{\eta_i}}_i)\right)$, $\vect{P}_{i} = \text{diag}\left(\rho^{N_1^i}_i, \dots, \rho^{N_{\eta_i}^i}_i\right)$, $\dot{\vect{P}}_{i} = \text{diag}\left(\dot{\rho}^{N_1^i}_i, \dots, \dot{\rho}^{N^i_{\eta_i}}_i\right)$, $\vect{e}_i = \vect{P}^{-1}_i \vect{\xi}_i$ and $\dot{\vect{\xi}}_{i} = \Bigl[\dot{\xi}_{i}^{N_1^i}, \dots ,\dot{\xi}_{i}^{N^i_{\eta_i}}\Bigr]^\top$.

\begin{remark}
    \label{Remark: boundness E impliess funnel satisfaction}
    From \eqref{eq: transformed normalized disagreement}, it follows that if the vector $\vect{\epsilon}_{i}$ is bounded, then $e^{N_j^i}_i$ remains confined within the interval $(-1,1)$ for all $N_j^i \in \Nhop{i}{k}$. Consequently, for every $i\in \mathcal{V}$ and each $N^i_j \in \Nhop{i}{k}$, $\xi_{i}^{N^i_j}$ evolves in compliance with \eqref{eq: definition of the bounds on the disagreement term}.
\end{remark}


\section{$k$-hop Prescribed Performance State Observer}\label{state observer}
In this section a decentralized \mbox{$k$-hop} Prescribed Performance State Observer ($k$-hop PPSO) is introduced to solve Problem \ref{Problem state estimation reformulated}.
In this regard, assume that each agent $N_j^i \in \Nhop{i}{k}$ updates its estimate $\hat{x}_{i}^{N_j^i}$ of the state of agent $i$ as:
\begin{equation}
    \label{eq: component of the state estimation error dyanmics}
    \dot{\hat{x}}_{i}^{N_j^i} = f_i(\estimate{x}{N_j^i}{i}) + \hat{g}^{N_j^i}_i - \rho_i^{N_j^i}(t)^{-1} J_T(e_i^{N_j^i}) \epsilon_{i}^{N_j^i}(t),
\end{equation}
where $\hat{g}^{N_j^i}_i$ is the estimate of $g_i(u_i(t))$ computed by $N_j^i$, and $\rho_i^{N_j^i}(t)$, $J_T(e_i^{N_j^i})$ and $\epsilon_i^{N_j^i}$ are defined as in Section \ref{Section: Prescribed Performance Observer}.
\begin{remark}
    \label{Remark on the decentralization of the observer and on the possibility of removing f and g}
    Note that $\epsilon_{i}^{N_j^i}(t)$, and consequently $\dot{\hat{x}}_{i}^{N_j^i}(t)$, is computed exclusively based on information received from the neighbors of agent $N_j^i$. Hence, provided that each $k$-hop neighbor ${N_j^i}$ of $i$ possesses knowledge of the structure of $f_i$, the proposed observer operates in a fully decentralized manner. Furthermore, as will be demonstrated later, under reasonable assumptions on $g_i(u_i)$, both $f_i(\estimate{x}{N_j^i}{i})$ and $\hat{g}^{N_j^i}_i$ can be omitted from \eqref{eq: component of the state estimation error dyanmics} while still guaranteeing the solution of Problem~\ref{Problem state estimation reformulated}.
\end{remark}

By stacking $\dot{\hat{x}}_{i}^{N_j^i}$ for all $N_j^i \in \Nhop{i}{k}$, the dynamics of $\estimate{\vect{x}}{}{i}$, defined as in \eqref{eq: estimation of state and input of agent i}, becomes:

\begin{equation}
    \label{Eq: State observer dynamics}
    \begin{split}
        \dot{\hat{\vect{x}}}_i &=\vect{f}_i(\estimate{\vect{x}}{}{i}) + \estimate{\vect{g}}{}{i} -  \vect{P}_i^{-1} \vect{J}_i \vect{\epsilon}_i,
    \end{split}
\end{equation}
where $\vect{f}_i(\estimate{\vect{x}}{}{i}) = \Bigl[\state{f}{}{i}(\estimate{x}{N_1^i}{i}), \dots ,\state{f}{}{i}(\estimate{x}{N^i_{\eta_i}}{i})\Bigr]^\top$, $\estimate{\vect{g}}{}{i} = \Bigl[\hat{g}^{N_1^i}_i, \dots, \hat{g}^{N_{\eta_i}^i}_i\Bigr]^\top$ and $\vect{P}_i$, $\vect{J}_i$, and $\vect{\epsilon}_i$ are defined as {in \eqref{eq: disagreement vector dynamics on agent i state}}.

 Assume each agent runs a convergent input observer guaranteeing $\norm{\error{\vect{g}}{}{i}(t)} \leq \delta^{\tilde{g}}_i$ to hold, with $\delta^{\tilde{g}}_i < \infty$. Then, the state estimation errors satisfy the prescribed performance bounds specified in Problem \ref{Problem state estimation error convergence}, as shown in the next result.

    
\begin{theorem}
    \label{Theorem: main theorem on state estimation convergence}
    Consider a heterogeneous MAS \eqref{eq: agent's dynamic} with connected graph $\mathcal{G}$ and decentralized state observers as in \eqref{eq: component of the state estimation error dyanmics}. For all $i \in \mathcal{V}$, assume that the input estimation error $\norm{\error{\vect{g}}{}{i}(t)}$ is upper bounded by some $ \delta^{\tilde{g}}_i \in \mathbb{R}_{\geq 0}$. Then, for all $N_j^i \in \Nhop{i}{k}$ and all $i\in \mathcal{V}$, the state estimation error $\tilde{x}^{N^i_j}_{i}(t)$ satisfies $|\tilde{x}_{i}^{N_j^i}(t)| < \delta^{N_j^i}_i(t)$ provided that $|\xi^{N^i_j}_{i}(0)| < \rho^{N_j^i}_i(0)$ holds for the disagreement terms, and $\rho^{N_j^i}_i(t)$ is designed so that $\lVert\vect{\rho}_i(t)\rVert  \leq \lambda_{\text{min}}(\kc{M}{i}) \min_{j \in \{1,\dots, \eta_i\}} \{\delta_i^{N_j^i}(t)\}$.
\end{theorem}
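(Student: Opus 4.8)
The reduction carried out just before Problem~\ref{Problem state estimation reformulated} shows that it suffices to prove the observer keeps every disagreement term inside its funnel, i.e. $|\xi_i^{N_j^i}(t)| < \rho_i^{N_j^i}(t)$ for all $t \geq 0$: this forces $\norm{\vect{\xi}_i} < \norm{\vect{\rho}_i} \leq \lambda_{\text{min}}(\kc{M}{i})\min_j\{\delta_i^{N_j^i}\}$, and since $\kc{M}{i}\succ 0$ by Lemma~\ref{Lemma on the positive definiteness of the matrix M} we have $\norm{\error{\vect{x}}{}{i}} \leq \lambda_{\text{min}}^{-1}(\kc{M}{i})\norm{\vect{\xi}_i}$, hence $|\error{x}{N_j^i}{i}(t)| < \delta_i^{N_j^i}(t)$. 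By Remark~\ref{Remark: boundness E impliess funnel satisfaction}, funnel confinement is equivalent to boundedness of the transformed vector $\vect{\epsilon}_i$. The plan is the standard prescribed-performance argument: (i) set up the transformed dynamics and invoke existence of a unique maximal solution on $[0,\tau_{\max})$; (ii) show $\vect{\epsilon}_i$ stays bounded on this interval; (iii) conclude $\tau_{\max}=\infty$ by contradiction.

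For step (i), substituting $\dot{\error{\vect{x}}{}{i}} = \vect{f}_i(\estimate{\vect{x}}{}{i}) - f_i(x_i)\vect{1} + \error{\vect{g}}{}{i} - \vect{P}_i^{-1}\vect{J}_i\vect{\epsilon}_i$ into $\dot{\vect{\xi}}_i = \kc{M}{i}\dot{\error{\vect{x}}{}{i}}$ and then into the $\vect{\epsilon}_i$-dynamics yields an ODE whose right-hand side is continuous in $t$ and locally Lipschitz in $\vect{\epsilon}_i$ on the open set where every $e_i^{N_j^i}\in(-1,1)$. The initial condition $|\xi_i^{N_j^i}(0)| < \rho_i^{N_j^i}(0)$ places $\vect{\epsilon}_i(0)$ in this set, so a unique maximal solution exists on some $[0,\tau_{\max})$.

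For step (ii), I would take $V_i = \tfrac12\norm{\vect{\epsilon}_i}^2$ and differentiate. Writing $\vect{w} := \vect{P}_i^{-1}\vect{J}_i\vect{\epsilon}_i$, the derivative contains the strictly negative quadratic form $-\vect{w}^\top\kc{M}{i}\vect{w} \leq -\lambda_{\text{min}}(\kc{M}{i})\norm{\vect{w}}^2$ (with $\kc{M}{i}\succ 0$ by Lemma~\ref{Lemma on the positive definiteness of the matrix M}), plus cross terms driven by the residual $\vect{f}_i(\estimate{\vect{x}}{}{i}) - f_i(x_i)\vect{1}$, by $\error{\vect{g}}{}{i}$, and by $\dot{\vect{P}}_i\vect{e}_i$. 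On $[0,\tau_{\max})$ each $e_i^{N_j^i}\in(-1,1)$, so $\norm{\vect{e}_i}<\sqrt{\eta_i}$ and $\norm{\error{\vect{x}}{}{i}}$ is bounded; together with the Lipschitz property of $f_i$, the assumed bound $\norm{\error{\vect{g}}{}{i}}\leq\delta^{\tilde{g}}_i$, and the bounded derivatives $\dot{\rho}_i^{N_j^i}$ from Definition~\ref{Definition of Prescribed performance function}, all cross terms are dominated by a finite constant $C_i$ times $\norm{\vect{w}}$. Since $\vect{J}_i\succeq 2I$ and the entries of $\vect{P}_i^{-1}$ are bounded below, $\norm{\vect{w}}\geq c_i\norm{\vect{\epsilon}_i}$ for some $c_i>0$, giving $\dot{V}_i\leq -\norm{\vect{w}}\bigl(\lambda_{\text{min}}(\kc{M}{i})\norm{\vect{w}}-C_i\bigr)$, which is negative once $\norm{\vect{\epsilon}_i}$ exceeds a fixed threshold. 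Hence $\norm{\vect{\epsilon}_i(t)}\leq\bar{\epsilon}_i<\infty$ on $[0,\tau_{\max})$.

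Boundedness of $\vect{\epsilon}_i$ confines each $e_i^{N_j^i}$ to a compact subinterval of $(-1,1)$, so the solution cannot approach the boundary of its domain; the standard maximal-interval argument then forces $\tau_{\max}=\infty$, funnel confinement holds for all $t\geq 0$, and the reduction above completes the proof. I expect the main obstacle to lie in step (ii): bounding the residual $\vect{f}_i(\estimate{\vect{x}}{}{i}) - f_i(x_i)\vect{1}$ requires the estimation errors to already be bounded, which is precisely what the funnel provides, so the estimate must be confined to the maximal interval where membership is guaranteed, while simultaneously verifying that the state-dependent gain $\vect{J}_i$ grows fast enough to dominate every perturbation as the funnel boundary is approached.
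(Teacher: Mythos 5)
Your proof is correct and follows the same skeleton as the paper's: the reduction via the disagreement vector and $\lambda_{\text{min}}(\kc{M}{i})$, the Lyapunov function $V=\tfrac12\norm{\vect{\epsilon}_i}^2$, the negative quadratic form supplied by $\kc{M}{i}\succ 0$ (Lemma~\ref{Lemma on the positive definiteness of the matrix M}), and the key observation that the perturbation (Lipschitz residual of $f_i$, input estimation error, $\dot{\vect{P}}_i\vect{e}_i$) is bounded precisely because the funnel constraint bounds $\norm{\error{\vect{x}}{}{i}}$ wherever $\vect{e}_i\in(-1,1)^{\eta_i}$. The only substantive divergence is in how boundedness of $\vect{\epsilon}_i$ is closed out: you use the classical prescribed-performance argument --- $\dot{V}_i\leq-\norm{\vect{w}}\bigl(\lambda_{\text{min}}(\kc{M}{i})\norm{\vect{w}}-C_i\bigr)$ is negative outside a fixed ball, hence $\vect{\epsilon}_i$ is bounded on the maximal interval, hence $\vect{e}_i$ stays in a compact subset of $(-1,1)^{\eta_i}$ and $\tau_{\max}=\infty$ --- whereas the paper completes the square to obtain $\dot{V}\leq-\kappa V+\vect{b}(t)$ and then proves invariance of $\{S<1\}$ for the auxiliary function $S=1-e^{-V}$, following \cite[Thm.~22]{10918825}. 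The two are equivalent in substance; your version makes the continuation argument, and hence the resolution of the apparent circularity you flag at the end, fully explicit, which the paper leaves implicit inside the $S$-function step. Two minor remarks: the error dynamics also carry the disturbance term $-\vect{w}_i$, which you omit from the substitution in step (i); it is uniformly bounded by assumption and merely enlarges $C_i$. Also, your closing worry that $\vect{J}_i$ must ``grow fast enough'' near the funnel boundary is unnecessary --- the uniform bounds $J_T\geq 2$ and $\rho_i^{N_j^i}\leq\overline{\rho}_i^{N_j^i}$ already make the negative term dominate for $\norm{\vect{w}}$ large, exactly as in the paper's use of $\alpha_J$ and $\alpha_\rho$.
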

\begin{proof}    
    Consider agent $i \in \mathcal{V}$. According to Assumption \ref{Assumption on neighbors}, $\mathcal{G}$ is a time invariant graph. Thus, $\kc{M}{i}$ is constant, $\dot{\vect{\xi}}_i = \kc{M}{i} \dot{\tilde{\vect{x}}}_{i}$ from \eqref{eq: disagreement vector expression},  and \eqref{eq: disagreement vector dynamics on agent i state} can be rewritten as:
    \begin{equation}
        \label{eq: transformed error dynamics}
        \dot{\vect{\epsilon}}_i = \vect{J}_i \vect{P}^{-1}_{i}(\kc{M}{i} \dot{\tilde{\vect{x}}}_{i} - \dot{\vect{P}}_{i} \vect{e}_i).
    \end{equation}
    From the agent's dynamic in \eqref{eq: agent's dynamic}, the definitions in \eqref{Definition: error on input and state estimation of agent i} and the observer \eqref{Eq: State observer dynamics}, $\dot{\tilde{\vect{x}}}_{i}$ becomes $\dot{\tilde{\vect{x}}}_{i} = \vect{f}_i(\estimate{\vect{x}}{}{i}) - \vect{f}_i(\vect{x}_{i}) + \error{\vect{g}}{}{i} -  \vect{P}_i^{-1} \vect{J}_i \vect{\epsilon}_i - \vect{w}_i$,
    where $\vect{f}_i(\vect{x}_i) = 1_{\eta_i} \otimes f_i(x_i)$, and $\vect{w}_i = 1_{\eta_i} \otimes w_i(\vect{x},t)$.
    Consider now the candidate Lyapunov function $V = \frac{1}{2} \vect{\epsilon}_i^T \vect{\epsilon}_i$, with time derivative $\dot{V} = \vect{\epsilon}_i^T \dot{\vect{\epsilon}}_i$. By replacing \eqref{eq: transformed error dynamics} and $\dot{\tilde{\vect{x}}}_{i}$, $\dot{V} $ results into:
    \vspace{-0.1cm}
    \begin{equation}
        \label{eq Lyapunvon function derivative}
        \begin{split}
            \dot{V} =& - \vect{\epsilon}_i^T \vect{J}_i \vect{P}^{-1}_{i} \kc{M}{i} \vect{P}_i^{-1} \vect{J}_i \vect{\epsilon}_i + \vect{\epsilon}_i^T \vect{J}_i \vect{P}^{-1}_{i} \Bigl\{\kc{M}{i} [ \error{\vect{g}}{}{i} \\ & + \vect{f}_i(\estimate{\vect{x}}{}{i})- \vect{f}_i(\vect{x}_{i}) - \vect{w}_i] - \dot{\vect{P}}_{i} \vect{e}_i\Bigr\}.
        \end{split}
    \end{equation} 
    Since $\kc{M}{i} \succ 0$ from Lemma \ref{Lemma on the positive definiteness of the matrix M}, $- \vect{\epsilon}_i^T \vect{J}_i \vect{P}^{-1}_{i} \kc{M}{i} \vect{P}_i^{-1} \vect{J}_i \vect{\epsilon}_i \leq - \lambda_{\min}(\kc{M}{i}) \alpha_{J}\alpha_{\rho} \vect{\epsilon}_i^T \vect{\epsilon}_i$ holds with $\alpha_{J} = \min_{N_j^i \in \Nhop{i}{k}} \bigl\{\min_{e^{N_j^i}_i \in (-1,1)} J_T(e^{N_j^i}_i)^2\bigr\}= 4$ and $\alpha_{\rho} = \max_{N_j^i \in \Nhop{i}{k}}\{\max_{t \in \mathbb{R}_{\geq 0}} \rho_i^{N_j^i}(t)^2\}$. From Definition~\ref{Definition of Prescribed performance function}, there exists $\overline{\rho}^{N_j^i}_i < \infty$ such that $\rho_i^{N_j^i}(t) \leq \overline{\rho}^{N_j^i}_i$. Thus, $\alpha_{\rho}$ is bounded as $\alpha_{\rho} \leq \max_{N_j^i \in \Nhop{i}{k}} \{(\overline{\rho}^{N_j^i}_i)^2\}$.    
    By summing and subtracting $\zeta \norm{\vect{P}^{-1}_{i} \vect{J}_i \vect{\epsilon}_{i}}^2$ for some $0 < \zeta < \lambda_{\min}(\kc{M}{i})$, \eqref{eq Lyapunvon function derivative} can be upper bounded as $\dot{V} \leq -(\lambda_{\min}(\kc{M}{i}) - \zeta)\alpha_{J} \alpha_{\rho} \norm{\vect{\epsilon}_i}^2+ \vect{\epsilon}_i^T \vect{J}_i \vect{P}^{-1}_{i}b(t) -\zeta \norm{\vect{P}^{-1}_{i} \vect{J}_i \vect{\epsilon}_{i}}^2$,
    where $b(t) = \kc{M}{i} [ \error{\vect{g}}{}{i} + \vect{f}_i(\estimate{\vect{x}}{}{i})- \vect{f}_i(\vect{x}_{i}) - \vect{w}_i] - \dot{\vect{P}}_{i} \vect{e}_i$.
    By noticing that $\vect{\epsilon}_i^T \vect{J}_i \vect{P}^{-1}_{i}b(t) -\zeta \norm{\vect{P}^{-1}_{i} \vect{J}_i \vect{\epsilon}_{i}}^2$ resemble terms of the quadratic form $\norm{\sqrt{\zeta} \vect{P}^{-1}_{i} \vect{J}_i \vect{\epsilon}_{i} - \frac{1}{2\sqrt{\zeta}}b(t)}^2$, $\dot{V} \leq -(\lambda_{\min}(\kc{M}{i}) - \zeta)\alpha_{J} \alpha_{\rho} \norm{\vect{\epsilon}_i}^2 + \frac{1}{4\zeta}b^\top(t)b(t)$ holds and $\dot{V}$ can be rewritten as
    \vspace{-0.1cm}
    \begin{equation}
        \label{eq: Lyapunov function final upper bound}
        \dot{V} \leq -\kappa V + \vect{b}(t),
    \end{equation}
    with $\kappa = 2(\lambda_{\min}(\kc{M}{i}) - \zeta)\alpha_{J} \alpha_{\rho}$ and $\vect{b}(t) = \frac{1}{4\zeta} \{\lambda_{\max}(\kc{M}{i})[\norm{\vect{f}_i(\estimate{\vect{x}}{}{i})- \vect{f}_i(\vect{x}_{i})} + \norm{\vect{w}_i} + \norm{\tilde{\vect{g}}_i}] + \norm{\dot{\vect{P}}_{i} \vect{e}_i}\}^2$. To proceed, let's check whether $\vect{b}(t)$ admits an upper bound $\Bar{\vect{b}}(t)$.    
    
    Define with $\tilde{\mathcal{X}}_i(t) = \{ \error{\vect{x}}{}{i} \in \mathbb{R}^{\eta_i}| -1_{\eta_i} <\vect{e}_i = \vect{P}^{-1}_i \vect{\xi}_i <1_{\eta_i}\}$ the time varying set containing the state estimation error $\error{\vect{x}}{}{i}$ for which the disagreement terms $\xi_{i}^{N_j^i}(t)$ satisfy the bounds \eqref{eq: definition of the bounds on the disagreement term} for all $N_j^i \in \Nhop{i}{k}$.
    As introduced in Section \ref{Section: disagreement vector and problem reformulation},  $|\error{x}{N_j^i}{i}(t)| \leq \norm{\error{\vect{x}}{}{i}(t)} \leq \lambda_{\text{min}}^{-1}(\kc{M}{i}) \norm{\vect{\xi}_i(t)}$ is valid by construction. Moreover, from $\tilde{\mathcal{X}}_i(t)$ definition, $\norm{\vect{\xi}_i(t)} < \norm{\vect{P}_i 1_{\eta_i}}$ holds in $\tilde{\mathcal{X}}_i(t)$, with $\norm{\vect{P}_i 1_{\eta_i}}$ bounded as a direct result of Definition \ref{Definition of Prescribed performance function}. Thus, since $\norm{\error{\vect{x}}{}{i}(t) } < \lambda_{\text{min}}^{-1}(\kc{M}{i}) \norm{\vect{P}_i 1_{\eta_i}}$ is valid in $\tilde{\mathcal{X}}_i(t)$, $\tilde{\mathcal{X}}_i(t)$ results to be a bounded open set.
    Being $f$ a Lipschitz continuous function, $\norm{\vect{f}_i(\estimate{\vect{x}}{}{i})- \vect{f}_i(\vect{x}_{i})}$ is bounded in $\tilde{\mathcal{X}}_i(t)$. Moreover, since $\dot{\vect{P}}_{i} \vect{e}_{i}$ is a column vector with $ \dot{\rho}^{N_j^i}_i e^{N_j^i}_i$ as entries, and $|\dot{\rho}^{N_j^i}_i(t) e^{N_j^i}_i| < |\dot{\rho}^{N_j^i}_i(t)| < \dot{\overline{\rho}}^{N_j^i}_i$ holds from Definition \ref{Definition of Prescribed performance function} with $\dot{\overline{\rho}}^{N_j^i}_i < \infty$, also $\norm{\dot{\vect{P}}_{i} \vect{e}_i}$ results to be bounded. Then, since $\norm{\tilde{\vect{g}}_i}$ and $\norm{{\vect{w}}_i}$ are bounded by assumption, an upper bound $\Bar{\vect{b}}(t) < \infty$ on $\vect{b}(t)$ is guaranteed to exist for all $ \error{\vect{x}}{}{i} \in \tilde{\mathcal{X}}_i(t)$.
    
    Inspired by \cite[Thm. 22]{10918825}, to prove the invariance of the set $\tilde{\mathcal{X}}_i(t)$, we introduce an auxiliary function $S(\vect{e}_{i}) = 1 - e^{- V(\vect{e}_{i})}$. Note that, from its definition, $S$ satisfies: (i) $S(\vect{e}_{i}) \in (0,1)$ for all $e_{\psi_{l_i}} \in (-1, 1)$, and (ii) $S(\vect{e}_{i}) \rightarrow 1$ as $\vect{e}_{i} \rightarrow \partial \mathcal{D} $, with $\mathcal{D} = \bigtimes_{j=1}^{\eta_i}(-1,1)$. Therefore, studying the boundedness of $\vect{\epsilon}_{i}(\vect{e}_i)$ through the one of $V$ reduces to proving that $S(\vect{e}_{i}) < 1$ holds for all $t$. 
    
    By replacing  \eqref{eq: Lyapunov function final upper bound} and $V(\vect{e}_{i}) = - \ln (1-S(\vect{e}_{i}))$ in $S(\vect{e}_{i})$ derivative, i.e., $\dot{S}(\vect{e}_{i}) = \dot{V}(\vect{e}_{i})(1-S(\vect{e}_{i}))$, $ \dot{S}(\vect{e}_{i}) \leq - \kappa (1-S(\vect{e}_{i})) \Bigl(-\frac{1}{\kappa}\vect{b}(t) - \ln (1-S(\vect{e}_{i}))\Bigr)$ is obtained.
    Since $\kappa$ and $1-S(\vect{e}_{i})$ are positive terms by definition, to verify whether $\dot{S}(\vect{e}_{i}) \leq 0 $ holds, it suffices to study under which conditions $-\frac{1}{\kappa}\vect{b}(t) - \ln (1-S(\vect{e}_{i}))\geq 0$ is valid.
    Note that $-\frac{1}{\kappa}\vect{b}(t) - \ln (1-S(\vect{e}_{i}))\geq 0 $ is satisfied for all $\vect{e}_{i} \in \Omega^c_{\vect{e}}$, where $\Omega_{\vect{e}}=\bigl\{\vect{e}_{i} \in \mathcal{D}| S(\vect{e}_{i}) < 1- e^{-\frac{{\vect{b}}(t)}{\kappa}} \bigr\}$, and that $-\frac{1}{\kappa}\vect{b}(t) - \ln (1-S(\vect{e}_{i})) = 0 $ holds for $\vect{e}_{i} \in  \partial\Omega_{\vect{e}}$. Thus, $\dot{S}(\vect{e}_{i}) \leq 0$  for $\vect{e}_{i} \in \Omega^c_{\vect{e}}$, with $\dot{S}(\vect{e}_{i}) =0$ iff $\vect{e}_{i} \in \partial\Omega_{\vect{e}}$. 
    
    Since the initialization satisfies $|\xi^{N^i_j}_{i}(0)| < \rho^{N_j^i}_i(0)$ for all $N_j^i \in \Nhop{i}{k}$, it follows that $e^{N_j^i}_{i}(0) \in (-1,1)$ for all $N_j^i \in \Nhop{i}{k}$ and therefore that $S(\vect{e}_{i}(0)) < 1$. Moreover, since $e^{-\frac{{\vect{b}}(t)}{\kappa}} \geq e^{-\frac{\Bar{\vect{b}}(t)}{\kappa}} > 0$ by definition, the condition $S(\vect{e}_{i})) < 1 $ is preserved for all $t \in \mathbb{R}_{\geq 0}$, independently of whether $\vect{e}_{i}$ is initialized in $\Omega_{\vect{e}}$ or not. From the inequality $S(\vect{e}_{i})) < 1 $, boundedness of $V(\vect{e}_{i})$, and therefore of the transformed error $\vect{\epsilon}_{i}$, follows. As a result, inequality \eqref{eq: definition of the bounds on the disagreement term} is satisfied.
    If $\rho^{N_j^i}_i(t)$ is designed so that $\lVert\vect{\rho}_i(t)\rVert \leq \lambda_{\text{min}}(\kc{M}{i}) \min_{j \in \{1,\dots, \eta_i\}} \{\delta_i^{N_j^i}(t)\}$ holds, $|\error{x}{N_j^i}{i}(t)| <  \delta^{N_j^i}_i(t)$ is guaranteed by construction for all $N_j^i \in \Nhop{i}{k}$ as explained in Section \ref{Section: disagreement vector and problem reformulation}.
\end{proof}

\begin{remark}
    \label{Remark on input observer avoided}
    Theorem \ref{Theorem: main theorem on state estimation convergence} assumes the existence of a convergent input observer ensuring $\norm{\error{\vect{g}}{}{i}(t)} \leq \delta^{\tilde{g}}_i < \infty$. To relax this assumption, we will propose a $k$-hop Prescribed Performance Input Observer in Section \ref{input observer}.
    Note that, under Assumption \ref{Assumption on g}-(i), the input observer can be omitted, and thus the state observer in \eqref{eq: component of the state estimation error dyanmics} reduces to $\dot{\hat{\vect{x}}}_i =\vect{f}_i(\estimate{\vect{x}}{}{i}) -  \vect{P}_i^{-1} \vect{J}_i \vect{\epsilon}_i$. In this case, satisfaction of the prescribed performance can be proven following the reasoning of Theorem \ref{Theorem: main theorem on state estimation convergence}, while treating $\vect{g}_i$ as a bounded disturbance. Note that, under the assumption of bounded $g_i$, ${g}_i(u_i)$ in \eqref{eq: agent's dynamic} can be extended to treat explicit state dependency, i.e., $g_i(x_i(t), u_i(t))$.
\end{remark}

As mentioned in Remark \ref{Remark on the decentralization of the observer and on the possibility of removing f and g}, under further assumptions on $g_i(u_i)$, \eqref{Eq: State observer dynamics} can be modified to avoid the need of $f_i(\estimate{x}{N_j^i}{i})$.

\begin{theorem}
    \label{Theorem: theorem on state estimation convergence under invariance set}
     Consider a heterogeneous MAS \eqref{eq: agent's dynamic} with connected graph $\mathcal{G}$ and decentralized state observers $\dot{\hat{x}}_{i}^{N_j^i} = \hat{g}^{N_j^i}_i - \rho_i^{N_j^i}(t)^{-1} J_T(e_i^{N_j^i}) \epsilon_{i}^{N_j^i}(t)$ for all $i\in \mathcal{V}$ and $N_j^i \in \Nhop{i}{k}$. For all $i \in \mathcal{V}$, assume that the input estimation error $\norm{\error{\vect{g}}{}{i}(t)}$ is upper bounded by $ \delta^{\tilde{g}}_i \in \mathbb{R}_{\geq 0}$ and that ${g}_{i}(u_i)$ is designed s.t. the agent dynamics as in \eqref{eq: agent's dynamic}
     evolves in a bounded set $\mathbb{X}_i \subset \mathbb{R}^{n_i}$. Then, for all $N_j^i \in \Nhop{i}{k}$ and all $i\in \mathcal{V}$, the state estimation error $\tilde{x}^{N^i_j}_{i}(t)$ satisfies $|\tilde{x}_{i}^{N_j^i}(t)| < \delta^{N_j^i}_i(t)$ provided that $|\xi^{N^i_j}_{i}(0)| < \rho^{N_j^i}_i(0)$ holds for the disagreement terms and $\rho^{N_j^i}_i(t)$ is designed so that $\lVert\vect{\rho}_i(t)\rVert \leq \lambda_{\text{min}}(\kc{M}{i}) \min_{j \in \{1,\dots, \eta_i\}} \{\delta_i^{N_j^i}(t)\}$.
    
\end{theorem}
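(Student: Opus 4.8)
The plan is to replicate the argument of Theorem~\ref{Theorem: main theorem on state estimation convergence} almost verbatim and to isolate the single place where dropping the $f_i$ term alters the analysis. First I would write the stacked error dynamics for the simplified observer. Subtracting the true stacked dynamics $\dot{\vect{x}}_i = \vect{f}_i(\vect{x}_i) + \vect{g}_i + \vect{w}_i$, with $\vect{g}_i = 1_{\eta_i}\otimes g_i(u_i)$, from $\dot{\hat{\vect{x}}}_i = \estimate{\vect{g}}{}{i} - \vect{P}_i^{-1}\vect{J}_i\vect{\epsilon}_i$ gives $\dot{\tilde{\vect{x}}}_i = \error{\vect{g}}{}{i} - \vect{f}_i(\vect{x}_i) - \vect{w}_i - \vect{P}_i^{-1}\vect{J}_i\vect{\epsilon}_i$. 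The only structural change from Theorem~\ref{Theorem: main theorem on state estimation convergence} is that the difference $\vect{f}_i(\estimate{\vect{x}}{}{i}) - \vect{f}_i(\vect{x}_i)$ is replaced by the isolated term $-\vect{f}_i(\vect{x}_i)$.

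Next I would reuse the Lyapunov candidate $V = \tfrac12\vect{\epsilon}_i^\top\vect{\epsilon}_i$ and, via $\dot{\vect{\xi}}_i = \kc{M}{i}\dot{\tilde{\vect{x}}}_i$ together with \eqref{eq: transformed error dynamics}, recover the same dominant negative term $-\vect{\epsilon}_i^\top\vect{J}_i\vect{P}_i^{-1}\kc{M}{i}\vect{P}_i^{-1}\vect{J}_i\vect{\epsilon}_i \le -\lambda_{\min}(\kc{M}{i})\alpha_J\alpha_\rho\norm{\vect{\epsilon}_i}^2$ plus the cross term $\vect{\epsilon}_i^\top\vect{J}_i\vect{P}_i^{-1}b(t)$, now with $b(t) = \kc{M}{i}[\error{\vect{g}}{}{i} - \vect{f}_i(\vect{x}_i) - \vect{w}_i] - \dot{\vect{P}}_i\vect{e}_i$. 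Completing the square exactly as in Theorem~\ref{Theorem: main theorem on state estimation convergence} yields $\dot{V}\le -\kappa V + \vect{b}(t)$ with $\kappa$ unchanged and $\vect{b}(t) = \tfrac{1}{4\zeta}\{\lambda_{\max}(\kc{M}{i})[\norm{\vect{f}_i(\vect{x}_i)} + \norm{\vect{w}_i} + \norm{\error{\vect{g}}{}{i}}] + \norm{\dot{\vect{P}}_i\vect{e}_i}\}^2$.

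The main obstacle, and the only genuinely new step, is showing that $\vect{b}(t)$ admits a finite upper bound $\Bar{\vect{b}}(t)$. In Theorem~\ref{Theorem: main theorem on state estimation convergence} the $f$-contribution appeared as a difference bounded by Lipschitz continuity applied to $\norm{\error{\vect{x}}{}{i}}$, which is confined by the bounded funnel set $\tilde{\mathcal{X}}_i(t)$, so boundedness of $x_i$ itself was never needed. Here the isolated term $\norm{\vect{f}_i(\vect{x}_i)}$ must be bounded directly, and this is exactly where the new hypothesis enters: since $g_i(u_i)$ is designed so that the agent trajectory remains in the bounded set $\mathbb{X}_i \subset \mathbb{R}^{n_i}$, continuity of $f_i$ on the compact closure $\overline{\mathbb{X}}_i$ guarantees $\norm{\vect{f}_i(\vect{x}_i)} \le \sqrt{\eta_i}\,\max_{x\in\overline{\mathbb{X}}_i}|f_i(x)| < \infty$. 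Combined with $\norm{\error{\vect{g}}{}{i}}\le\delta^{\tilde g}_i$ by assumption, boundedness of $\norm{\vect{w}_i}$, and boundedness of $\norm{\dot{\vect{P}}_i\vect{e}_i}$ inside $\tilde{\mathcal{X}}_i(t)$ (from $|\dot{\overline{\rho}}^{N_j^i}_i| < \infty$ as in Definition~\ref{Definition of Prescribed performance function}), this delivers $\Bar{\vect{b}}(t)<\infty$.

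From here the argument closes identically to Theorem~\ref{Theorem: main theorem on state estimation convergence}: introducing $S(\vect{e}_i) = 1 - e^{-V}$, the bound $\dot{V}\le -\kappa V + \vect{b}(t)$ forces $\dot{S}(\vect{e}_i)\le 0$ on $\Omega_{\vect{e}}^c$, and since $S(\vect{e}_i(0))<1$ follows from the initialization $|\xi_i^{N_j^i}(0)|<\rho_i^{N_j^i}(0)$ while $e^{-\Bar{\vect{b}}(t)/\kappa}>0$, the condition $S(\vect{e}_i)<1$ is preserved for all $t$, establishing invariance of $\tilde{\mathcal{X}}_i(t)$. Boundedness of $\vect{\epsilon}_i$ then implies that each $\xi_i^{N_j^i}$ obeys \eqref{eq: definition of the bounds on the disagreement term}, and the design condition $\norm{\vect{\rho}_i(t)}\le\lambda_{\min}(\kc{M}{i})\min_{j}\{\delta_i^{N_j^i}(t)\}$ finally yields $|\error{x}{N_j^i}{i}(t)|<\delta_i^{N_j^i}(t)$ for all $N_j^i\in\Nhop{i}{k}$, as explained in Section~\ref{Section: disagreement vector and problem reformulation}.
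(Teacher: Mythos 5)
Your proposal is correct and follows essentially the same route as the paper's proof: the same Lyapunov candidate, the same bound $\dot{V}\le -\kappa V + \vect{b}(t)$ with the isolated term $\norm{\vect{f}_i(\vect{x}_i)}$ replacing the Lipschitz difference, boundedness of that term obtained from the invariance of the bounded set $\mathbb{X}_i$, and the same closing argument via $S(\vect{e}_i)$. The only cosmetic difference is that you bound $\norm{\vect{f}_i(\vect{x}_i)}$ by continuity of $f_i$ on the compact closure $\overline{\mathbb{X}}_i$ whereas the paper invokes Lipschitz continuity of $f_i$ on the bounded set; both are equivalent here.
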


\begin{proof}
    Consider the candidate Lyapunov function $V = \frac{1}{2} \vect{\epsilon}_i^T \vect{\epsilon}_i$. Following similar steps to those in Theorem \ref{Theorem: main theorem on state estimation convergence}, $\dot{V}$ can be upper bounded as $\dot{V} \leq -\kappa V + \vect{b}(t)$, with $\vect{b}(t) = \frac{1}{4\zeta} \{\lambda_{\max}(\kc{M}{i})[\norm{\vect{f}_i(\vect{x}_{i})} + \norm{\vect{w}_i} + \norm{\tilde{\vect{g}}_i}] + \norm{\dot{\vect{P}}_{i} \vect{e}_i}\}^2$. Then, since $f_i$ is Lipschitz, and ${g}_{i}(u_i)$ ensures the agent dynamics \eqref{eq: agent's dynamic} to evolve in $\mathbb{X}_i \subset \mathbb{R}^{n_i}$, $\norm{\vect{f}_i(\vect{x}_{i})}$ is bounded, and so is $\vect{b}(t)$. Thus, validity of Theorem \ref{Theorem: theorem on state estimation convergence under invariance set} follows by introducing $S(\vect{e}_i)$ as in Theorem \ref{Theorem: main theorem on state estimation convergence}.
\end{proof}

\section{$k$-hop Prescribed Performance Input Observer}\label{input observer}
Even though the results of the previous sections hold under a general $k$-hop input observer, e.g. the one in \cite{11187218}, in this section we propose a decentralized $k$-hop Prescribed Performance Input Observer ($k$-hop PPIO) to estimate each agent's input map $g_i(u_i)$  while guaranteeing $|\error{g}{N_j^i}{i}(t)| < \theta^{N_j^i}_i(t)$ for all $i \in \mathcal{V}$ and all $N_i^j \in \Nhop{i}{k}$, where $\theta^{N_j^i}_i(t)$ is a prescribed performance function as in \eqref{eq: choice of prescribed performance function}.

\subsection{Input disagreement dynamics}
\label{Section on input disagreement dynamics}
Following the design of the $k$-hop PPSO in Sections~\ref{Disagreement dynamics}-\ref{state observer}, let's introduce the input disagreement term on the estimate of $g_i$ performed by $N_j^i$:
\begin{equation}
    \hspace{-0.1cm}
    \label{eq: disagreement vector on the input estimate of agent i}
    \mu_i^{N_j^i} =\sum_{l\in(\Neigh{N_j^i}{} \cap \Nhop{i}{k})} (\statediff{\hat{g}}{N_j^i}{i}{\hat{g}}{l}) + |\Neigh{N_j^i}{} \cap \Neigh{i}{}|  (\hat{g}^{N_j^i}_i - g_i).
\end{equation}
Note that, by stacking $\mu_i^{N_j^i}$ for all $N_i^j \in \Nhop{i}{k}$, a relationship similar to the one in \eqref{eq: disagreement vector expression} holds for $\vect{\mu}_i := \Bigl[\mu_{i}^{N_1^i}, \dots ,\mu_{i}^{N^i_{\eta_i}}\Bigr]^\top$, i.e.:
\vspace{-0.1cm}
\begin{equation}
    \label{eq: disagreement vector on the input function}
    \vect{\mu}_i = \kc{M}{i} \error{\vect{g}}{}{i}.
\end{equation}
Then, for the same reasons specified in Section \ref{Section: disagreement vector and problem reformulation}, to impose $|\error{g}{N_j^i}{i}(t)| < \theta^{N_j^i}_i(t)$ for all $i \in \mathcal{V}$ and $N_i^j \in \Nhop{i}{k}$, it suffices to impose $|\mu_i^{N_j^i}| < \omega_i^{N_j^i}(t)$ for all $N_j^i \in \Nhop{i}{k}$, where each $\omega_i^{N_{j}^i}(t)$ is a prescribed performance function designed so that $\left\lVert\vect{\omega}_i(t)\right\rVert \leq \lambda_{\text{min}}(\kc{M}{i}) \min_{j \in \{1,\dots, \eta_i\}} \{\theta_i^{N_j^i}(t)\}$ holds with $\vect{\omega}_i(t)= \Bigl[\omega_i^{N_1^i}(t),\dots , \omega_i^{N_{\eta_i}^i(t)}\Bigr]^\top$.

Similarly as in Section \ref{Section: Prescribed Performance Observer}, denote with $q_i^{N_j^i} \in (-1,1)$ the normalization of $\mu_{i}^{N_j^i}(t)$ with respect to $\omega_{N_j^i}^i$, i.e., $q_i^{N_j^i} =\omega^{N_j^i}_i(t)^{-1}\mu_{i}^{N_j^i}$, and let $\nu^{N_j^i}_i = T(q^{N_j^i}_i) = T(\omega^{N_j^i}_i(t)^{-1}\mu_{i}^{N_j^i})$ denote the transformed input disagreement.
By defining the transformed input disagreement vector $\vect{\nu}_{i}  := \left[\nu^{N_1^i}_i, \dots ,\nu_{i}^{N^i_{\eta_i}} \right]^\top$, we get:
\vspace{-0.1cm}
\begin{equation}
    \label{eq: disagreement vector dynamics on agent i input}
     \dot{\vect{\nu}}_i = \vect{J}^g_i  \vect{\Omega}^{-1}_{i}(\dot{\vect{\mu}}_{i} - \dot{\vect{\Omega}}_{i} \vect{q}_i),
\end{equation}
where $\vect{J}^g_i = \text{diag}\left(J_T(q^{N_1^i}_i), \dots, J_T(q^{N^i_{\eta_i}}_i)\right)$, $\vect{\Omega}_{i} = \text{diag}\left(\omega^{N_1^i}_i, \dots, \omega^{N_{\eta_i}^i}_i\right)$, $\dot{\vect{\Omega}}_{i} = \text{diag}\left(\dot{\omega}^{N_1^i}_i, \dots, \dot{\omega}^{N^i_{\eta_i}}_i\right)$, $\vect{q}_i = \vect{\Omega}^{-1}_i \vect{\mu}_i$ and $\dot{\vect{\mu}}_{i} = \Bigl[\dot{\mu}_{i}^{N_1^i}, \dots ,\dot{\mu}_{i}^{N^i_{\eta_i}}\Bigr]^\top$.

\begin{remark}
    Similarly to Remark \ref{Remark: boundness E impliess funnel satisfaction}, if $\vect{\nu}_{i}$ is bounded, then $q^{N_j^i}_i$ remains confined within the interval $(-1,1)$ for all $N_j^i \in \Nhop{i}{k}$, and $\mu_{i}^{N^i_j}$ evolves satisfying $|\mu_i^{N_j^i}| < \omega_i^{N_j^i}(t)$.
\end{remark}

\subsection{$k$-hop Prescribe Performance Input Observer design}
Let each agent $N_j^i \in \Nhop{i}{k}$ update its estimate $\hat{g}_{i}^{N_j^i}$ as:
\begin{equation}
    \label{eq: component of the input estimation dyanmics}
    \dot{\hat{g}}_{i}^{N_j^i} = - \omega_i^{N_j^i}(t)^{-1} J_T(q_i^{N_j^i}) \nu_{i}^{N_j^i}(t).
\end{equation}
By staking $\dot{\hat{g}}_{i}^{N_j^i}$ for all $N_j^i \in \Nhop{i}{k}$, the dynamics of $\estimate{\vect{g}}{}{i}$, defined as in \eqref{eq: estimation of state and input of agent i}, becomes:
\vspace{-0.1cm}
\begin{equation}
    \label{Eq: Input observer dynamics}
        \dot{\hat{\vect{g}}}_i = -  \vect{\Omega}_i^{-1} \vect{J}^g_i \vect{\nu}_i,
\end{equation}
where $\vect{\Omega}_i$, $\vect{J}^g_i$, and $\vect{\nu}_i$ are defined as in \eqref{eq: disagreement vector dynamics on agent i input}.

\begin{theorem}
    \label{Theorem: main theorem on input estimation convergence}
    Consider a heterogeneous MAS \eqref{eq: agent's dynamic} with connected graph $\mathcal{G}$ and decentralized input observers as in \eqref{eq: component of the input estimation dyanmics}. Under Assumption \ref{Assumption on g}-(ii), the estimation error $\tilde{g}^{N^i_j}_{i}(t)$ satisfies $|\tilde{g}_{i}^{N_j^i}(t)| < \theta^{N_j^i}_i(t)$  for all $N_j^i \in \Nhop{i}{k}$ and all $i\in \mathcal{V}$, provided that $|\mu^{N^i_j}_{i}(0)| < \omega^{N_j^i}_i(0)$ holds for the disagreement terms, and $\omega^{N_j^i}_i(t)$ is designed so that $\lVert\vect{\omega}_i(t)\rVert \leq \lambda_{\text{min}}(\kc{M}{i}) \min_{j \in \{1,\dots, \eta_i\}} \{\theta_i^{N_j^i}(t)\}$ holds.
\end{theorem}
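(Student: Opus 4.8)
The plan is to mirror the Lyapunov and invariance argument of Theorem~\ref{Theorem: main theorem on state estimation convergence}, exploiting the structural parallel between the input disagreement dynamics \eqref{eq: disagreement vector dynamics on agent i input} and the state disagreement dynamics \eqref{eq: disagreement vector dynamics on agent i state}. First, since $\mathcal{G}$ is time invariant and hence $\kc{M}{i}$ is constant, I would differentiate \eqref{eq: disagreement vector on the input function} to obtain $\dot{\vect{\mu}}_i = \kc{M}{i}\dot{\tilde{\vect{g}}}_i$, where the error dynamics follow from the observer \eqref{Eq: Input observer dynamics} as $\dot{\tilde{\vect{g}}}_i = -\vect{\Omega}_i^{-1}\vect{J}^g_i\vect{\nu}_i - \dot{\vect{g}}_i$, with $\dot{\vect{g}}_i = 1_{\eta_i}\otimes \dot{g}_i$. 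Substituting into \eqref{eq: disagreement vector dynamics on agent i input} then yields the closed-loop transformed dynamics $\dot{\vect{\nu}}_i = -\vect{J}^g_i\vect{\Omega}_i^{-1}\kc{M}{i}\vect{\Omega}_i^{-1}\vect{J}^g_i\vect{\nu}_i + \vect{J}^g_i\vect{\Omega}_i^{-1}(-\kc{M}{i}\dot{\vect{g}}_i - \dot{\vect{\Omega}}_i\vect{q}_i)$.

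Next, I would take the candidate Lyapunov function $V = \frac{1}{2}\vect{\nu}_i^\top\vect{\nu}_i$ and compute $\dot{V} = \vect{\nu}_i^\top\dot{\vect{\nu}}_i$. Using $\kc{M}{i}\succ 0$ from Lemma~\ref{Lemma on the positive definiteness of the matrix M}, the quadratic term is bounded as $-\vect{\nu}_i^\top\vect{J}^g_i\vect{\Omega}_i^{-1}\kc{M}{i}\vect{\Omega}_i^{-1}\vect{J}^g_i\vect{\nu}_i \leq -\lambda_{\min}(\kc{M}{i})\alpha_J\alpha_\omega\norm{\vect{\nu}_i}^2$, with $\alpha_J = 4$ and $\alpha_\omega = \max_{N_j^i\in\Nhop{i}{k}}\{\max_{t\in\mathbb{R}_{\geq 0}}\omega_i^{N_j^i}(t)^2\}$ finite by Definition~\ref{Definition of Prescribed performance function}. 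Completing the square on the cross term exactly as in Theorem~\ref{Theorem: main theorem on state estimation convergence}, by adding and subtracting $\zeta\norm{\vect{\Omega}_i^{-1}\vect{J}^g_i\vect{\nu}_i}^2$ for some $0<\zeta<\lambda_{\min}(\kc{M}{i})$, delivers $\dot{V}\leq -\kappa V + \vect{b}(t)$ with $\kappa = 2(\lambda_{\min}(\kc{M}{i})-\zeta)\alpha_J\alpha_\omega$ and $\vect{b}(t) = \frac{1}{4\zeta}\{\lambda_{\max}(\kc{M}{i})\norm{\dot{\vect{g}}_i} + \norm{\dot{\vect{\Omega}}_i\vect{q}_i}\}^2$.

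The decisive step is verifying that $\vect{b}(t)$ admits a finite upper bound $\Bar{\vect{b}}(t)$, and here the argument is in fact cleaner than in Theorem~\ref{Theorem: main theorem on state estimation convergence}, since there is neither a Lipschitz $\vect{f}_i$ term nor a disturbance $\vect{w}_i$ to contend with. I would invoke Assumption~\ref{Assumption on g}-(ii) to bound $\norm{\dot{\vect{g}}_i}$, which is precisely the boundedness of the rate of change of the input map that makes input tracking feasible, and bound $\norm{\dot{\vect{\Omega}}_i\vect{q}_i}$ componentwise via $|\dot{\omega}_i^{N_j^i}(t)q_i^{N_j^i}| < |\dot{\omega}_i^{N_j^i}(t)| < \dot{\overline{\omega}}_i^{N_j^i} < \infty$, again from Definition~\ref{Definition of Prescribed performance function}. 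Hence $\vect{b}(t)$ is bounded on the set where $\vect{q}_i$ stays in $(-1,1)^{\eta_i}$.

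Finally, I would reproduce the invariance argument of Theorem~\ref{Theorem: main theorem on state estimation convergence}: introducing $S(\vect{q}_i) = 1 - e^{-V}$ and using $\dot{S} = \dot{V}(1-S)$, one obtains $\dot{S}\leq -\kappa(1-S)(-\frac{1}{\kappa}\vect{b}(t) - \ln(1-S))$, which is nonpositive outside the sublevel set $\Omega_{\vect{q}}=\{\vect{q}_i : S(\vect{q}_i) < 1 - e^{-\vect{b}(t)/\kappa}\}$. Since $|\mu_i^{N_j^i}(0)| < \omega_i^{N_j^i}(0)$ ensures $q_i^{N_j^i}(0)\in(-1,1)$ and hence $S(\vect{q}_i(0))<1$, while $e^{-\vect{b}(t)/\kappa}\geq e^{-\Bar{\vect{b}}(t)/\kappa}>0$, the condition $S<1$ is preserved for all $t\in\mathbb{R}_{\geq 0}$. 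This yields boundedness of $\vect{\nu}_i$, and therefore $|\mu_i^{N_j^i}(t)| < \omega_i^{N_j^i}(t)$ for all $N_j^i\in\Nhop{i}{k}$. The design condition $\norm{\vect{\omega}_i(t)} \leq \lambda_{\min}(\kc{M}{i})\min_{j}\{\theta_i^{N_j^i}(t)\}$ then gives $|\error{g}{N_j^i}{i}(t)| < \theta_i^{N_j^i}(t)$ by the reasoning of Section~\ref{Section on input disagreement dynamics}. The main obstacle is thus confined to the boundedness of $\vect{b}(t)$, which hinges entirely on Assumption~\ref{Assumption on g}-(ii); everything else transfers verbatim from the state-observer proof.
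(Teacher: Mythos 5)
Your proposal is correct and follows essentially the same route as the paper's proof: the same Lyapunov function $V=\frac{1}{2}\vect{\nu}_i^\top\vect{\nu}_i$, the same completion-of-squares yielding $\dot V\leq-\kappa V+\vect{b}(t)$ with identical $\kappa$ and $\vect{b}(t)$, boundedness of $\vect{b}(t)$ via Assumption~\ref{Assumption on g}-(ii) and Definition~\ref{Definition of Prescribed performance function}, and the invariance argument through $S(\vect{q}_i)=1-e^{-V}$ (which the paper only sketches by reference to Theorem~\ref{Theorem: main theorem on state estimation convergence}, while you spell it out). The only cosmetic difference is a sign on the $\dot{\vect{g}}_i$ term in the error dynamics, which is immaterial since only $\norm{\dot{\vect{g}}_i}$ enters the bound.
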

\begin{proof}
    The proof follows similar arguments to those of Theorem \ref{Theorem: main theorem on state estimation convergence}. Consider the candidate Lyapunov function $V = \frac{1}{2} \vect{\nu}_i^\top \vect{\nu}_i$, whose time derivative is $\dot{V} = \vect{\nu}_i^\top \dot{\vect{\nu}}_i$. From \eqref{eq: disagreement vector dynamics on agent i state}, \eqref{eq: disagreement vector on the input function}, \eqref{Eq: Input observer dynamics} and the definition of $\error{\vect{g}}{}{i}$ in \eqref{Definition: error on input and state estimation of agent i}, $\dot{V}$ becomes $\dot{V} = \vect{\nu}_i^\top  \vect{J}^g_i  \vect{\Omega}^{-1}_{i}\{\kc{M}{i} [\dot{\vect{g}}_i-  \vect{\Omega}_i^{-1} \vect{J}^g_i \vect{\nu}_i] - \dot{\vect{\Omega}}_{i} \vect{q}_i\}$. By adding and subtracting $\zeta \norm{\vect{\Omega}^{-1}_{i} \vect{J}^g_i \vect{\nu}_{i}}^2$ for some $0 < \zeta < \lambda_{\min}(\kc{M}{i})$, and by applying Young’s inequality, it follows that $\dot{V} \leq -\kappa V + \vect{b}(t)$
   holds with $\kappa = 2(\lambda_{\min}(\kc{M}{i}) - \zeta)\alpha_{J} \alpha_{\omega}$, $\vect{b}(t) = \frac{1}{4\zeta} \{\lambda_{\max}(\kc{M}{i})\norm{\dot{\vect{g}}_i} + \norm{\dot{\vect{\Omega}}_{i} \vect{q}_i}\}^2$, $\alpha_{J} = 4$ and  $\alpha_{\omega} = \max_{N_j^i \in \Nhop{i}{k}} \{(\overline{\omega}^{N_j^i}_i)^2\}$, where $\omega^{N_j^i}_i(t) \leq \overline{\omega}^{N_j^i}_i$ holds according to Definition \ref{Definition of Prescribed performance function}.
   Since $\norm{\dot{\vect{g}}_i}$ is bounded by Assumption \ref{Assumption on g}-(ii), and $\norm{\dot{\vect{\Omega}}_{i} \vect{q}_i}$ is bounded for similar reason as $\norm{\dot{\vect{P}}_{i} \vect{e}_i}$ in the proof of Theorem \ref{Theorem: main theorem on state estimation convergence}, $\vect{b}(t)$ is bounded. By introducing $S(\vect{q}_{i}) = 1 - e^{- V(\vect{q}_{i})}$, the proof of Theorem \ref{Theorem: main theorem on input estimation convergence} follows the one of Theorem \ref{Theorem: main theorem on state estimation convergence}. Thus, due to space limitation, the remaining part of the proof is omitted here.
\end{proof}

Theorem \ref{Theorem: main theorem on input estimation convergence} guarantees the estimation error $\error{g}{N_j^i}{i}(t)$ to remain within the prescribed performance bounds. Thus, it provides a way to satisfy the assumption on boundedness of $\norm{\error{\vect{g}}{}{i}(t)}$ required for the validity of Theorem \ref{Theorem: main theorem on state estimation convergence}.

\begin{remark}
    Note that, as in Theorem \ref{Theorem: main theorem on state estimation convergence}, estimation convergence is guaranteed regardless of the upper bound on $\dot{g}_i$.
\end{remark}
\section{$k$-hop PPSO-Based Controller}\label{Closed-loop}
The proposed \mbox{$k$-hop} PPSO allows each agent $i \in \mathcal{V}$ to estimate the state of all agents $N_j^i \in \Nhop{i}{k}$.
This estimation capability enables the synthesis of a closed-loop control law that leverages local state estimates to accomplish the team’s objective.

Consider the vectorized form of the dynamics in \eqref{eq: agent's dynamic}, i.e., $\dot{\vect{x}}(t) = \vect{f}(\vect{x}(t)) + \vect{g}(\vect{u}(t)) + \vect{w}(\vect{x},t)$,
where $\vect{x}(t)= [x_1(t), \dots, x_N(t)]^\top $, $\vect{f}(\vect{x}(t)) = [f_1(x_1), \dots, f_N(x_N)]^\top$, $\vect{g}(\vect{u}(t)) = [g_1(u_1), \dots, g_N(u_N)]^\top$, $\vect{w}(\vect{x},t) = [w_1(\vect{x},t), \dots, w_N(\vect{x},t)]^\top$ and $\vect{u}(t)= [u_1(t), \dots, u_N(t)]^\top $ represents a nonlinear \mbox{state-feedback} control input of the form:
\begin{equation}
\label{input equation closed loop}
    \vect{u} = \bm{\psi}(\vect{x}) = \left[\psi_1(\Bar{\vect{x}}_1,\vect{x}^1),  \dots ,\psi_N(\Bar{\vect{x}}_N,\vect{x}^N) \right]^\top,
\end{equation}
where, for each $i \in \mathcal{V}$, $\vect{x}^i$ is defined as in \eqref{stack vector of real values estimated by agent i}, and  $\Bar{\vect{x}}_i$ contains the state information of agent $i$ and of all $j \in \Neigh{i}{}$.

Since ${\vect{x}}^{i}$ is not locally available, the controller in \eqref{input equation closed loop} is implemented using $\estimate{\vect{x}}{i}{}$ for all $i \in \mathcal{V}$, i.e.:
\begin{equation}
\label{eq: vectorized local input controller}
    \vect{u} = \bm{\psi}(\Bar{\vect{x}},\hat{\vect{x}}) = \left[\psi_1(\Bar{\vect{x}}_1,\hat{\vect{x}}^1),  \dots ,\psi_N(\Bar{\vect{x}}_N,\hat{\vect{x}}^N) \right]^\top.
\end{equation}

Noting that $\Bar{\vect{x}}_i$ consists of components of $\vect{x}$ and that $\estimate{\vect{x}}{i}{} = \vect{x}^i + \error{\vect{x}}{i}{}$ holds by definition, by introducing  $\error{\vect{x}}{}{} = [\error{\vect{x}}{1}{},\dots, \error{\vect{x}}{N}{}]^\top$, $\vect{u}$ can be written as $\vect{u} = \bm{\psi}(\vect{x},\vect{x} + \error{\vect{x}}{}{})$.

\begin{definition}[\!\!\cite{Sontag2008}]
    \label{Definition set-ISS definition fomal}
    A system $\dot{x}=f(x,u,t)$, with $f:\mathbb{R}^n \times \mathbb{R}^m \times \mathbb{R}_{\geq 0} \rightarrow  \mathbb{R}^n$, is set-Input to State Stable (set-ISS) with respect to $u$ if, for each initial condition $x(0)$ and any locally essentially bounded input $u$ satisfying $\sup_{t\geq 0}\norm{u(t)}\leq \infty$, the distance $\norm{x(t)}_{\mathcal{A}} = \inf_{a \in \mathcal{A}}\{\norm{x-a}\}$ of $x(t)$ to $\mathcal{A}$ satisfies $\norm{x(t)}_{\mathcal{A}}\leq \beta(\norm{x(0)}_{\mathcal{A}},t) + \gamma \left(\sup_{0 \leq \tau\leq t}\norm{u(\tau)}\right)$ for all $t \in \mathbb{R}_{\geq 0}$, where $\beta$ and $\gamma$ are a $\mathcal{KL}$ and $\mathcal{K}$ function, respectively.
\end{definition}

\begin{assumption}
    \label{Assumption on regular controller}
    The nominal controller $\vect{u} = \bm{\psi}(\vect{x})$ in \eqref{input equation closed loop} guarantees convergence of the \mbox{multi-agent} system to a set $\mathcal{A}$ representing the team objective, irrespective of the disturbance $\vect{w}(\vect{x},t)$.
\end{assumption}

Assumption \ref{Assumption on regular controller} is not restrictive in practice. Indeed, since $\vect{w}(\vect{x},t)$ is uniformly bounded by Assumption \ref{Assumption on existence of a solution}, \eqref{input equation closed loop} can be designed robustly based on the upper bound of $\vect{w}(\vect{x},t)$.
Therefore, since the nominal controller in \eqref{input equation closed loop} guarantees convergence to the desired set 
$\mathcal{A}$ despite $\vect{w}$, the effect of $\vect{w}$ can be considered as part of the nominal dynamics. Consequently, the behavior of the system under \eqref{eq: vectorized local input controller} can be studied by analyzing the perturbed system $\dot{\vect{x}} = \Phi(\vect{x}, \error{\vect{x}}{}{}, t) = \vect{f}(\vect{x}) + \vect{g}(\bm{\psi}(\vect{x},\vect{x} + \error{\vect{x}}{}{})) + \vect{w}(\vect{x},t)$, where $\error{\vect{x}}{}{}$ is treated as an input disturbance affecting the nominal unforced system.

Let $\delta_{\tilde{\vect{x}}}(t) = \norm{[\delta_{\tilde{x}}^{1}(t)^\top, \dots, \delta_{\tilde{x}}^{N}(t)^\top]^\top}$, with $\delta_{\tilde{x}}^i(t) = [{\delta}^i_{N_1^i}(t), \dots, {\delta}^i_{N_{\eta_i}^i}(t)]^\top$, be the norm of the prescribed performance functions associated with the estimation error $\tilde{\vect{x}}$. Moreover, denote with $\overline{\delta}_{\tilde{\vect{x}}}$ the desired  upper bound on $\norm{\tilde{\vect{x}}}$.

\begin{assumption}
    \label{Assumption on the desired steady state estimation error}
    There exists a finite time $t_x >0$ for which $\delta_{\tilde{\vect{x}}}(t) \leq \overline{\delta}_{\tilde{\vect{x}}}$ holds for all $t \geq t_x$.
\end{assumption}

Since ${\delta}^i_{N_{j}^i}(t)$ are design choices for all $i \in \mathcal{V}$ and all $j \in \Nhop{i}{k}$, they can be designed to satisfy Assumption \ref{Assumption on the desired steady state estimation error}. Thus, Assumption \ref{Assumption on the desired steady state estimation error} is also not restrictive in practice.

\begin{theorem}
    \label{Theorem on closed loop stability}
     Consider a heterogeneous MAS system \eqref{eq: agent's dynamic} with connected graph $\mathcal{G}$ and distributed observers \eqref{eq: component of the state estimation error dyanmics}. Suppose each agent executes the control law in \eqref{eq: vectorized local input controller}. Then, under Assumption \ref{Assumption on the desired steady state estimation error},  the MAS trajectory evolves toward $\mathcal{A}_e = \{\vect{x}: \norm{\vect{x}}_{\mathcal{A}} < \gamma (\overline{\delta}_{\tilde{\vect{x}}})\}$ if $\Phi(\vect{x}, \error{\vect{x}}{}{}, t)$ is \mbox{set-ISS} with respect to $\mathcal{A}$ and the feedback controller in \eqref{input equation closed loop} ensures convergence of the MAS towards $\mathcal{A}$ regardless of $\vect{w}(\vect{x},t)$. 
\end{theorem}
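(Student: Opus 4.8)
The plan is to regard the estimation-based closed-loop system as the perturbed nominal system $\dot{\vect{x}} = \Phi(\vect{x}, \error{\vect{x}}{}{}, t)$ in which $\error{\vect{x}}{}{}$ plays the role of an exogenous input, and then to combine its \mbox{set-ISS} property with respect to $\mathcal{A}$ with the prescribed-performance guarantees already established for the observer. First I would invoke Theorem \ref{Theorem: main theorem on state estimation convergence}, which certifies that every component of the estimation error obeys $|\error{x}{N_j^i}{i}(t)| < \delta_i^{N_j^i}(t)$ for all $N_j^i \in \Nhop{i}{k}$ and all $i \in \mathcal{V}$. Stacking these bounds and taking the Euclidean norm yields $\norm{\error{\vect{x}}{}{}(t)} < \delta_{\tilde{\vect{x}}}(t)$ for all $t$, so the input feeding $\Phi$ is locally essentially bounded and Definition \ref{Definition set-ISS definition fomal} is applicable.

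Next I would exploit Assumption \ref{Assumption on the desired steady state estimation error} to split the time axis at $t_x$: for all $t \geq t_x$ the inequality $\delta_{\tilde{\vect{x}}}(t) \leq \overline{\delta}_{\tilde{\vect{x}}}$ yields $\norm{\error{\vect{x}}{}{}(t)} < \overline{\delta}_{\tilde{\vect{x}}}$. The key step is then to apply the \mbox{set-ISS} estimate not from $t=0$ but from the initial time $t_x$, using the causality (flow/semigroup) property of the solution, so that only the tail of the disturbance enters the gain term. This gives $\norm{\vect{x}(t)}_{\mathcal{A}} \leq \beta\bigl(\norm{\vect{x}(t_x)}_{\mathcal{A}}, t - t_x\bigr) + \gamma\bigl(\sup_{t_x \leq \tau \leq t} \norm{\error{\vect{x}}{}{}(\tau)}\bigr)$, and since $\gamma$ is a $\mathcal{K}$ function and the supremum is dominated by $\overline{\delta}_{\tilde{\vect{x}}}$, the gain term is upper bounded by $\gamma(\overline{\delta}_{\tilde{\vect{x}}})$.

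Finally I would let $t \to \infty$. Because $\beta$ is a $\mathcal{KL}$ function, $\beta\bigl(\norm{\vect{x}(t_x)}_{\mathcal{A}}, t - t_x\bigr) \to 0$, and therefore $\limsup_{t \to \infty} \norm{\vect{x}(t)}_{\mathcal{A}} \leq \gamma(\overline{\delta}_{\tilde{\vect{x}}})$, i.e., the trajectory converges to the inflated goal set $\mathcal{A}_e = \{\vect{x}: \norm{\vect{x}}_{\mathcal{A}} < \gamma(\overline{\delta}_{\tilde{\vect{x}}})\}$. The hypothesis that the nominal controller \eqref{input equation closed loop} drives the MAS to $\mathcal{A}$ regardless of $\vect{w}$ (Assumption \ref{Assumption on regular controller}) serves to guarantee that $\mathcal{A}$ is indeed the attractor of the unforced ($\error{\vect{x}}{}{}=0$) dynamics, which makes the \mbox{set-ISS} estimate with respect to $\mathcal{A}$ meaningful and consistent with the zero-disturbance limit $\norm{\vect{x}(t)}_{\mathcal{A}} \leq \beta(\cdot,t) \to 0$.

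I expect the main obstacle to be the justification of the restart at $t_x$. The raw \mbox{set-ISS} estimate in Definition \ref{Definition set-ISS definition fomal} quantifies over the supremum on $[0,t]$, which would capture the possibly large transient peak of $\error{\vect{x}}{}{}$ on $[0, t_x)$ and yield only the looser ultimate bound $\gamma(\sup_{t \geq 0}\norm{\error{\vect{x}}{}{}})$. Obtaining the tight asymptotic gain $\gamma(\overline{\delta}_{\tilde{\vect{x}}})$ requires reapplying the estimate from $t_x$ with $\vect{x}(t_x)$ treated as a fresh initial condition, which is standard for \mbox{set-ISS} systems but must be stated explicitly, since it is precisely the device that converts the trajectory bound into the desired ultimate (steady-state) bound determined by $\overline{\delta}_{\tilde{\vect{x}}}$.
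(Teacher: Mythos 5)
Your proposal is correct and follows essentially the same route as the paper's proof: invoke Theorem \ref{Theorem: main theorem on state estimation convergence} and Assumption \ref{Assumption on the desired steady state estimation error} to bound $\norm{\error{\vect{x}}{}{}(t)}$ by $\overline{\delta}_{\tilde{\vect{x}}}$ for $t \geq t_x$, restart the set-ISS estimate at $t_x$, and let the $\mathcal{KL}$ term vanish to conclude convergence to $\mathcal{A}_e$. Your explicit remark about justifying the restart at $t_x$ (which the paper performs without comment) is a valid and welcome refinement, but it does not change the argument.
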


\begin{proof}
     Theorem \ref{Theorem: main theorem on state estimation convergence} guarantees $|\tilde{x}_{i}^{N_j^i}(t)| < \delta_{i}^{N_j^i}(t)$ to hold for all $i \in \mathcal{V}$, $N_j^i \in \Nhop{i}{k}$, and $t \in \mathbb{R}_{\geq 0}$. Thus, under Assumption \ref{Assumption on the desired steady state estimation error}, there exists $t_x$ such that $\norm{\tilde{\vect{x}}(t)} <  \overline{\delta}_{\tilde{\vect{x}}}$ holds for all $t \geq t_x$. Since $\Phi(\vect{x}, \error{\vect{x}}{}{}, t)$ is \mbox{set-ISS} and from $\vect{x}(t_x)$ the system evolves satisfying $\dot{\vect{x}} = \Phi(\vect{x}, \error{\vect{x}}{}{}, t)$ with $\norm{\error{\vect{x}}{}{}} < \overline{\delta}_{\tilde{\vect{x}}}$, $\norm{\vect{x}(t)}_{\mathcal{A}} < \beta(\norm{\vect{x}(t_x)}_{\mathcal{A}},t-t_{x}) + \gamma(\overline{\delta}_{\tilde{\vect{x}}})$ holds $ \forall t \geq t_{x}$. As a result, thanks to the convergence of $\beta(\norm{\vect{x}(t_{x})}_{\mathcal{A}},t-t_{x})$ to zero from $\mathcal{KL}$ function definition, $\vect{x}$ approaches $\mathcal{A}_e =  \{\vect{x}: \norm{\vect{x}}_{\mathcal{A}}< \gamma ( \overline{\delta}_{\tilde{\vect{x}}})\}$ as $t$ goes to infinity. 
\end{proof}

Theorem~\ref{Theorem on closed loop stability} shows that, under Assumptions~\ref{Assumption on regular controller} and~\ref{Assumption on the desired steady state estimation error}, the estimated states can be used in the decentralized controllers $u_i$ to achieve the system objective with a \mbox{worst-case} error governed by $\gamma(\overline{\delta}_{\tilde{\vect{x}}})$. Thus, since $\overline{\delta}_{\tilde{\vect{x}}}$ is a design choice, the desired degree of accuracy can be imposed at design stage.

\section{Simulations}\label{Simulations}
\begin{figure}[t!]
    \vspace{0.1cm}
    \centering
    \includegraphics[width = 0.85\linewidth]{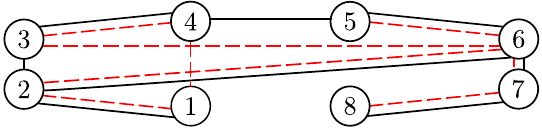}
    \caption{Graphs $\mathcal{G}_C$ and $\mathcal{G}_T$, respectively in solid and dashed lines.}
    \label{fig:Graph Gt used for consensus}
    \vspace{-0.7cm}
\end{figure}
Consider a multi-agent system composed of $N=8$ agents communicating and collaborating, respectively, according to the connected graphs $\mathcal{G}_C=(\mathcal{V},\mathcal{E}_C)$ and $\mathcal{G}_T=~(\mathcal{V},\mathcal{E}_T)$ in Fig. \ref{fig:Graph Gt used for consensus}. Denote with $\mathcal{N}^C_i$ and $\mathcal{N}^T_i$ the $i$-th agent's neighbors in graph $\mathcal{G}_C$ and $\mathcal{G}_T$, respectively. Suppose each agent behaves according to $\dot{x}_i =  f(x_i) + u_i $, where $x_i = [x_{i,1} \ x_{i,2}]^\top$ denotes the state of agent $i$, $f(x_i) = [\tanh(0.5 x_{i,1} + 0.5x_{i,2}) \ \sin(0.5 x_{i,1} - 0.5x_{i,2}) ]^T$ is a Lipschitz continuous function with Lipschitz constant $l_f = 1$, and $u_i =  k_c [\sum_{j\in (\mathcal{N}^C_i  \cap  \mathcal{N}^T_i) } \tanh(x_j-x_i) + \sum_{j\in \mathcal{N}^T_i\setminus\mathcal{N}^{C}_i} \tanh(\hat{x}^i_j-x_i) ]$, with design parameter $k_c$, is a bounded input designed to drive the agents toward consensus by exploiting only the edges of the graph $\mathcal{G}_T$. 
Note that when applied to vectors, $\tanh()$ has to be intended component-wise. The choice of homogeneous dynamics and a simple control objective is done to simplify the verification of the assumptions required to guarantee observer convergence.
In particular, the estimation term $\hat{x}^i_j$, used in $\sum_{j\in \mathcal{N}^T_i\setminus\mathcal{N}^{C}_i}\tanh(\hat{x}^i_j-x_i)$, is introduced to cope with the lack of local information available to agent $i$ about the state of all agents $j \in \mathcal{N}^T_i\setminus\mathcal{N}^{C}_i$. This necessity stems from enforcing consensus using only the edges of $\mathcal{G}_T$, a condition imposed to assess the proposed observers, namely the \mbox{$k$-hop} PPSO and \mbox{$k$-hop} PPIO, which are applied here with $k = 3$ to estimate nonlocal states.

Note that $u_i$ is bounded by definition. Therefore, since Assumption~\mbox{\ref{Assumption on g}-(i)} holds, the $k$-hop PPSO could be simplified by omitting the $k$-hop PPIO. However, because $\dot{u}_i$ is also bounded and Theorem \ref{Theorem: main theorem on input estimation convergence} applies, we retain the full formulation to validate the complete framework. For similar reason, we avoid canceling the nonlinear term $f(x_i)$ with the controller $u_i$.
To analyze stability under ideal input and characterize the set-ISS property of the \mbox{closed-loop} dynamics, define the disagreement projection operator as $\bm{\Pi} := I_{2N} - (1_N 1^\top_N \otimes I_2)/N$ \cite{eb184279-05f5-3acc-a300-750c6f4a17e8}. Accordingly, the consensus disagreement vector is defined as $ \vect{e}_{c} := [e_{\text{c},1}, \dots, e_{\text{c},N}]^\top =  \bm{\Pi} \vect{x}$, the average state as $\bar{x}(t) := \frac{1}{N} \sum_{i=1}^N x_i(t)$, and the average of the nonlinear functions as $\bar{f}(t) := \frac{1}{N} \sum_{i=1}^N f(x_i(t))$. By rewriting the input of every agent as $u_i = u_i^{\text{ideal}} + u_i^{\text{error}}$, where $u_i^{\text{ideal}} :=  k_c \sum_{j\in \mathcal{N}^T_i } \tanh(x_j-x_i)$ and $u_i^{\text{error}} := k_c \sum_{j\in \mathcal{N}^T_i\setminus\mathcal{N}^{C}_i} [ \tanh(\tilde{x}_{j}^{i}+ x_j-x_i) - \tanh(x_j-x_i)]$, and by computing $\dot{\vect{e}}_{c}$, $V = \frac{1}{2} \vect{e}_{c}^\top \vect{e}_{c}$ can be used to prove, following standard Lyapunov-based analysis for consensus, that the MAS achieves consensus under the ideal input, and that the \mbox{closed-loop} system with true input is \mbox{set-ISS} with respect to the consensus manifold \cite{Sontag2008}. As a result, Theorem \ref{Theorem on closed loop stability} holds and the proposed controller is expected to drive the MAS toward consensus. Note that, since $\bar{x}(t)$ evolves as $\dot{\bar{x}}(t) = \frac{1}{N} \sum_{i=1}^N \dot{x}_i(t)$ and $\bar{f}(t) \neq 0$ holds in general, the MAS under ideal inputs $u^{\text{ideal}}_i$ achieves consensus around a \mbox{time-varying} mean. Thus, a similar result is also expected under the true decentralized inputs $u_i$.

For simulation purposes, a sampling time $dt = 10^{-5}$s has been selected.
To guarantee the prescribed performance $|\tilde{x}_{i,l}^{N_j^i}(t)| < \delta(t)$ and $|\tilde{u}_{i,l}^{N_j^i}(t)| < \theta(t)$ to hold, with $\delta(t) = 13.96 e^{-5 t}+0.117 $ and $ \theta(t) = 230 e^{-5 t}+ 1.39 $, $\rho_{N_j^i,l}^i(t)$ and $\omega_{N_j^i,l}^i(t)$ are designed according to Section \ref{Section: disagreement vector and problem reformulation} for all $i\in \mathcal{V}$, $N_j^i \in \Nhop{i}{k}$ and $l \in \{1,2\}$. To satisfy the initialization condition, each component $\omega_{N_j^i,l}^i(0)$ and $\rho_{N_j^i,l}^i(0)$, respectively of $\omega_{N_j^i}^i(0)$ and $\rho_{N_j^i}^i(0)$, has been tuned such that $|\xi^{i}_{N_j^i,l}(0)| < \rho_{N_j^i,l}^i(0)$ and $|\mu^{i}_{N_j^i,l}(0)| < \omega_{N_j^i,l}^i(0)$ hold for all $i \in \mathcal{V}$, $N_j^i \in \Nhop{i}{k}$ and $l \in \{1,2\}$.

Fig. \ref{fig: simulation results}(a) illustrates the \mbox{closed-loop} system behavior when the state estimates provided by the proposed observer are used in the controller. As expected from Theorem \ref{Theorem on closed loop stability}, given the small upper bound on the steady state estimation error imposed by the $k$-hop PPSO, the MAS achieves consensus. However, although the goal is achieved, the agents are not stabilized around a stationary mean. As introduced earlier, this behavior is not caused by the use of estimated states in the control law, but rather by the nonlinearity of $f$.
Given the large number of estimates involved in the network, and due to space limitation,  Fig.~\ref{fig: simulation results}(b) and Fig.~\ref{fig: simulation results}(c) present only the estimation results regarding Agent $4$. For this agent, the performance functions on the disagreement terms have been selected as $\rho_{N_j^4,l}^4(t) = \rho(t) = 2.8 e^{-5 t}+0.02$ and  $\omega_{N_j^4,l}^4(t) = \omega(t) = 39.27 e^{-5 t}+0.033$ for all $N_j^4 \in \Nhop{4}{k}$ and $l \in \{1,2\}$. While Fig.~\ref{fig: simulation results}(b) shows the evolution of the maximum absolute disagreement and estimation error for the estimate of Agent $4$'s state (obtained by agents $N_j^4 \in \Nhop{4}{k}$), Fig.~\ref{fig: simulation results}(c) shows those regarding the input estimates. In accordance with Theorem~\ref{Theorem: main theorem on state estimation convergence} and Theorem~\ref{Theorem: main theorem on input estimation convergence}, Fig. \ref{fig: simulation results} shows that under proper initialization, if $|\xi^{i}_{N_j^i,l}(t)| < \rho_{N_j^i,l}^i(t)$ and $|\mu^{i}_{N_j^i,l}(t)| < \omega_{N_j^i,l}^i(t)$ hold for all $t$, then $|\tilde{x}_{i,l}^{N_j^i}(t)| < \delta(t)$ and $|\tilde{u}_{i,l}^{N_j^i}(t)| < \theta(t)$ are satisfied for all $i\in \mathcal{V}$, $N_j^i \in \Nhop{i}{k}$ and $l \in \{1,2\}$.

\section{Conclusion and Future Work}\label{Conclusion and Future work}
We proposed \mbox{$k$-hop} Prescribed Performance Observers that enable each agent to estimate the state and input of agents up to $k$ hops away while ensuring predefined transient and steady-state performance. The proposed solution is robust to bounded external disturbance and do not require global knowledge of the network. Furthermore, we proved that under set-ISS condition of the feedback control law, the state estimates can be used in the controller to drive the agents toward the team objective.

Future work will address observer design for time-varying and directed graphs.

\begin{figure}[t!]
    \vspace{0.2cm}
    \centering
    \includegraphics[width=1\linewidth]{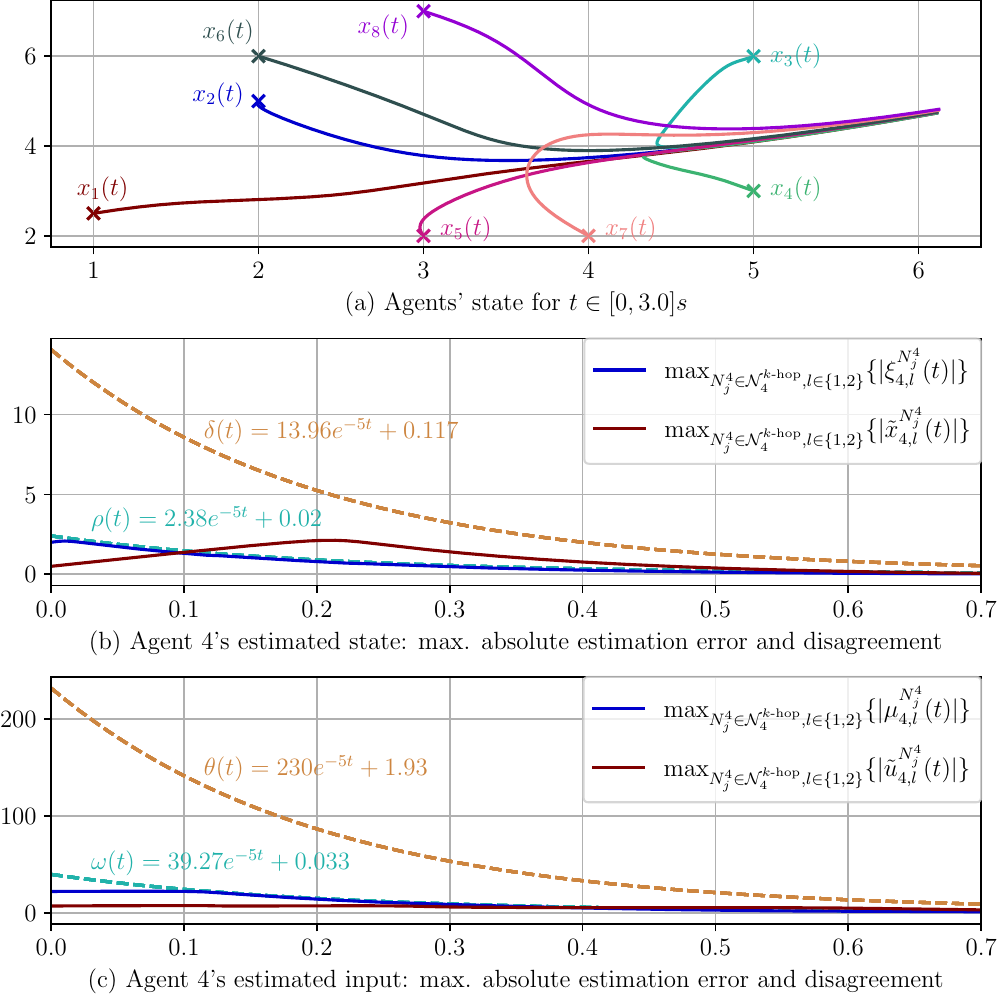}
    \caption{(a) Agents' state evolution; initial states are represented by crosses. (b) and (c) Maximum absolute estimation error and estimation disagreement on Agent 4's state and input. The performance bounds are represented by dashed lines.}
    \label{fig: simulation results}
    \vspace{-0.7cm}
\end{figure}

\bibliographystyle{IEEEtran}
\bibliography{references} 

\end{document}